\newtheorem{theorem}{Theorem}
\begin{document}

\title[Low-resolution Prior Equilibrium (LRPE) Network]{Low-resolution Prior Equilibrium Network for CT Reconstruction}

\author{Yijie Yang$^{1}$, Qifeng Gao$^{1}$ and Yuping Duan${^\ast}{^3}$}

\address{$^1$\ Center for Applied Mathematics, Tianjin University, Tianjin, 300072, China}
\address{$^3$\  School of Mathematical Sciences, Beijing Normal University No. 19, XinJieKouWai St., HaiDian District, Beijing 100875}
\address{$^\ast$\ Corresponding author}
\ead{doveduan@gmail.com}

\begin{abstract}
The unrolling method has been investigated for learning variational models in X-ray computed tomography. However, for incomplete data reconstruction, such as sparse-view and limited-angle problems, the unrolling method of gradient descent of the energy minimization problem cannot yield satisfactory results. In this paper, we present an effective CT reconstruction model, where the low-resolution image is introduced as a regularization  for incomplete data problems. In what follows, we utilize the deep equilibrium approach to unfolding of the gradient descent algorithm, thereby constructing the backbone network architecture for solving the minimization model. We theoretically discuss the convergence of the proposed low-resolution prior equilibrium model and provide the necessary conditions to guarantee its convergence. Experimental results on both sparse-view and limited-angle reconstruction problems are provided, demonstrating that our end-to-end low-resolution prior equilibrium model outperforms other state-of-the-art methods in terms of noise reduction, contrast-to-noise ratio, and preservation of edge details.
\end{abstract}

%
\vspace{2pc}
\noindent{\it Keywords}: CT reconstruction, limited-angle, sparse-view, deep equilibrium model, unrolling, low-resolution image prior
%
%
%
%

\section{Introduction}
\label{sect1}
Computed Tomography (CT) is a fundamental imaging tool that finds wide applications in various fields, such as industrial non-destructive testing, medical diagnoses, and security inspections. Specifically, let us consider the observation of a corrupted set of measurements, i.e., $\bm{b} \in \mathbb R^{N_{views}\times N_{bins}}$, obtained by applying a linear measurement operator $\bm{A}:\mathbb R^{N^2}\rightarrow \mathbb R^{N_{views}\times N_{bins}}$ to a target $\bm{u} \in \mathbb R^{N^2}$ with certain additive noises $\xi \in \mathbb R^{N_{views}\times N_{bins}}$. Mathematically, we can formulate the inverse problem as follows
\begin{equation}
\label{model1}
\bm{A}\bm{u}  + \xi = \bm{b},
\end{equation}
where $N_{{views}}$ represents the number of angles within the angular interval, $N_{{bins}}$ represents the number of units on the detector, and $N$ represents the denotes the length and width of the target image.
Our objective is to reconstruct the clean image $\bm{u}$ from the measurement $\bm{b}$. In general, when the number of pixels in a reconstructed image exceeds the number of projection samples in CT imaging, the inverse problem of \eref{model1} becomes ill-posed.

Due to the radiation, various approaches have been explored, which can be primarily categorized into two main categories, either to modify the scanning protocol by reducing the tube voltage and/or tube current \cite{meng2020semi} or downsample the measured data for CT reconstruction, such as interior CT \cite{larke2011estimated,patz2014overdiagnosis,sim2020optimal,zhao2021fast}, and sparse-view CT \cite{zhang2018sparse,yang2018low,kang2017deep,wolterink2017generative,zickert2022joint,xu2019image}.
These approaches aim to achieve dose reduction while maintaining satisfactory image quality. However, employing the first category of approaches for radiation dose reduction may result in the measured CT data being heavily contaminated with noise. The standard filtered back-projection (FBP) method, when applied to reconstruct CT images using noise-contaminated CT data, suffers from a significant degradation in image quality. On the other hand, the second category of approaches, aimed at reducing radiation dose, usually introduces additional ill-posedness into the inverse problem. It is worth noting that the regularization method is an effective technique for tackling ill-posed problems by incorporating prior information. The total variation (TV) regularization, as a prominent example, is considered state-of-the-art in handling low-dose (LD) and few-view CT \cite{li2019few}.
Moreover, alternative sparsity regularization techniques, e.g., the utilization of wavelet frames \cite{zhao2013tight}, have been investigated in CT reconstruction. A noise suppression-guided image filtering reconstruction algorithm has recently been proposed as a solution to address the low signal-to-noise ratio (SNR) problem \cite{he2020noise}.

With recent advancements in artificial intelligence and hardware performance, deep learning (DL) has emerged as a promising approach for denoising in low-dose computed tomography, which has demonstrated encouraging results. Since then, various deep learning reconstruction methods have been studied including the pre-processing in the projection domain, projection-to-image reconstruction, and post-processing in the image domain \cite{gholizadeh2020deep,greffier2020image,hata2020combination,arndt2021deep,mohammadinejad2021ct,nam2021deep,noda2021low}. These studies have shown that DL approaches consistently deliver enhanced or comparable noise suppression and maintain structural fidelity \cite{shan2019competitive,lenfant2020deep,rozema2021iterative,shin2020low,baguer2020computed}.
Broadly speaking, we can roughly divide the learning-based methods for CT reconstruction into three categories as the post-processing methods \cite{schwab2019deep,lunz2018adversarial}, the plug-and-play priors methods \cite{sreehari2016plug,chan2016plug,ryu2019plug,nair2021fixed,pesquet2021learning}
and the unrolling methods \cite{bai2019deep,gilton2021deep}. There methods have greatly improved the reconstruction qualities compared to conventional methods. The remaining issues mainly include: firstly, the theoretical property guarantees of learning-based reconstruction methods have not been fully addressed, such as the convergence of the learning-based models. Secondly, for severely ill-posed reconstruction problems, such as limited-angle and sparse-view reconstruction issues, it still requires effective learning-based solutions.

\subsection{Convergence studies of learning-based methods}

{\textbf{Post-processing methods.}} A deep neural network-based post-processing of a model-based reconstruction is parameterized as $\mathcal{P}_{\theta} \circ \mathcal{B}$, where $\mathcal{B}$ denotes a classical reconstruction method (e.g., FBP),  $\mathcal{P}_{\theta}$ represents a deep convolutional network with parameters $\bm \theta$ and $\circ$ represents composite operations. In general, there is no convergence guarantee for the post-processing strategies. That means, a small value of $\|\bm{A} \bm{u}- \bm{b}\|$ does not necessarily imply a small value of $\|\bm{A} \mathcal{P}_\theta(\bm{u})-\bm{b}\|$ for the output of $\mathcal{P}_\theta$. Such issue was addressed in \cite{schwab2019deep} by parametrizing the operator $\mathcal{P}_\theta$ as $\mathcal{P}_\theta=\bm{I}+\left(\bm{I}-\bm{A}^{\top} \bm{A}\right) \mathcal{Q}_\theta$, where $\mathcal{Q}_\theta$ is a Lipschitz-continuous deep convolutional neural network. Since $\bm{I}-\bm{A}^{\top} \bm{A}$ is the projection operator in the null-space of $\bm{A}$, the operator $\mathcal{P}_\theta$ always satisfies $\bm{A} \mathcal{P}_\theta\left(\bm{u}\right)=\bm{A} \bm{u}$. Null-space networks were demonstrated to offer convergent regularization schemes. Inspired by the theory of optimal transport, an adversarial framework for learning the regularization term was proposed in \cite{lunz2018adversarial}, which possessed the stability guarantee, subjecting to the condition that the regularization is 1-Lipschitz and coercive.

{\textbf{Plug-and-play (PnP) priors methods.}} The PnP prior approach leverages the strengths of both realistic modeling through physical knowledge and flexible learning from data patterns. Relying on the \emph{maximum a posteriori probability} (MAP) estimator, the solution of \eref{model1} can be formulated as the following minimization problem
\begin{equation}
{{\bm{u}}} \in \mathop{\arg\min}\limits_{\bm{u}} \mathcal{S}(\bm{u}, \bm{b})+\mathcal{R}(\bm{u}; \bm{\theta}),\label{problem}
\end{equation}
where $\mathcal{S}(\bm{u}, \bm{b})=-\log(p_{{\bm b}|{\bm u}}(\bm{u}))$ is the likelihood relating the solution $\bm u$ to the measurement $\bm{b}$ and $\mathcal{R}(\bm{u}; \bm{\theta})=-\log(p_{\bm u}(\bm u))$ is the denoiser to be learned by deep neural network models. The PnP priors framework models the likelihood by obeying the noise distributions of the underlying physical process, while it extracts flexible priors from data observation.  The learning-based regularization have shown excellent empirical performance, which inspired the theoretical studies on its convergence.
The first result demonstrating the global objective convergence of PnP-ADMM was presented in \cite{sreehari2016plug}, where the learned denoiser $\mathcal{R}(\bm{u}; \bm{\theta})$ should satisfy two requirements, i.e., it is continuously differentiable, and its gradient matrix is doubly stochastic. The fixed-point convergence of PnP-ADMM with a continuation scheme and bounded denoiser
was established in \cite{chan2016plug}.
Ryu \emph{et al.} \cite{ryu2019plug} demonstrated that the iterations of both PnP-proximal gradient-descent and PnP-ADMM exhibit contraction behavior when the denoiser $\mathcal{R}(\bm{u}; \bm{\theta})$ satisfies the Lipschitz condition. The fixed-point convergence of PnP-forward-backward splitting (FBS) and PnP-ADMM were proven in \cite{nair2021fixed}. Specifically, the proof was established for linear denoisers of the form $\mathcal R(\bm {u};\bm \theta) = \bm{Wu}$, where $\bm{W}$ is diagonalizable and its eigenvalues are within the range of [0, 1]. The convergence guarantees for the PnP methods were derived in \cite{hurault2021gradient} with gradient-step (GS), alleviating the need for such restrictive assumptions.
The GS denoiser, denoted as $\mathcal{R}(\bm{u}; \bm{\theta}) = \bm{I} - \nabla g_{\sigma}$, is formulated with $g_{\sigma}(\bm{u}) = |\bm{u} - {P}_{\sigma}(\bm{u})|$, where ${P}_{\sigma}$ represents a deep neural network devoid of structural constraints. The parametrization was shown to have enough expressive power to achieve state-of-the-art denoising performance in \cite{hurault2021gradient}.
In \cite{pesquet2021learning}, a closely related PnP approach was employed with the objective of providing an asymptotic characterization of the iterative PnP solutions. The key concept was to model maximally monotone operators (MMO) using a deep neural network, where the parameterization of MMOs was achieved by modeling the resolvent through a non-expansive neural network.

\begin{figure*}[t]
      \centering      \includegraphics[width=1.0\linewidth]{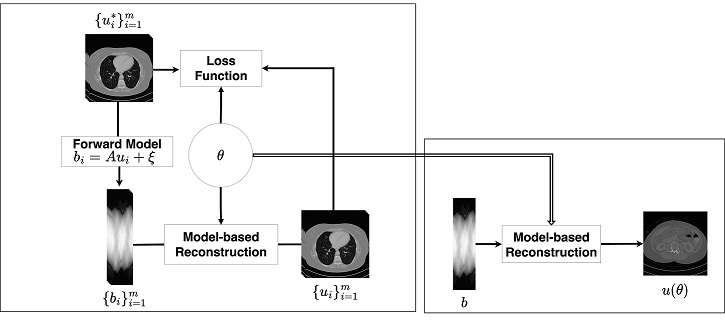}\\
    \caption{Depiction of a typical bilevel problem for image reconstruction. The left box represents the training process, which includes an upper-level loss and a lower-level cost function. During training, the objective is to minimize the upper-level loss function. Once the parameters $\bm{\theta}$ are learned, it is employed in the same image reconstruction task, as depicted in the right box.}
\label{bilevelflow}
\end{figure*}

{\textbf{Unrolling methods.}} Algorithm unrolling \cite{monga2021algorithm} unfolds the iterative algorithms with loops or recursion into non-recursive neural networks to construct interpretable and effective deep learning models.
The origin of unrolling can be traced back to the seminal work by Gregor and LeCun \cite{gregor2010learning} for solving sparse coding via unfolding the iterative soft-thresholding algorithm.
We can formally express the unrolling model for \eref{problem} as a bilevel minimization problem
\begin{equation}
\eqalign{
 \mathop{\min}\limits_{\bm{\theta}} &~~\frac{1}{m} \sum_{i=1}^m \ell\left(\bm{u}_i(\bm{\theta}), \bm{u}^*_i\right)  \quad\quad\quad\quad\quad\quad\quad\quad\quad\quad\quad\quad \hfill(\textup{UL})   \\  \mbox{s.t.}& ~~ \bm{u}_i(\bm{\theta}) \in \mathop{\arg\min}\limits_{\bm{u}}\mathcal{S}\left(\bm{u}, {b}_i\right)+\mathcal{R}(\bm{u}; \bm{\theta}) \quad\quad\quad\quad\quad~\hfill(\textup{LL})} \label{bilevel}
\end{equation}
where a loss function $\ell(\cdot,\cdot)$ is employed to enforce the similarity between the reconstructed image $\bm{u}_i$ and ground truth ${\bm{u}_i}^*$, $\mathcal{S}$ represents the data term, and $\mathcal{R}$ denotes the learnable regularization term, $m$ indicates the number of training data samples. The upper-level (UL) loss function evaluates the quality of a vector of learnable parameters, while it also depends on the solution to the lower-level (LL) cost function. Figure \ref{bilevelflow} illustrates a generic bilevel problem for image reconstruction, where the model-based reconstruction represents the process of solving the lower-level minimization problem.

The bilevel problems \cite{bard2013practical,bracken1973mathematical,colson2007overview,engl1996regularization,dempe2020bilevel} are known for their inherent difficulties to be solved numerically. To utilize gradient descent methods, it is necessary to compute the derivative of the solution operator for the lower-level problem \eref{problem} with respect to the parameters $\bm{\theta}$. If the objective of the lower-level problem is differentiable and has a unique minimum, gradients can be computed using implicit differentiation \cite{ghadimi2018approximation}. However, non-smooth objectives, along with potentially multi-valued solution operators, are frequently employed in image and signal processing tasks. The method of unrolling involves utilizing an iterative algorithm that solves the lower-level problem and replaces the optimal solution ${\bm{u}}_i(\bm\theta)$ with the $K$ stages. We can express the unrolling network as follows
\begin{equation}
\bm{u}^{(k+1)}=f_\theta^{(k)}\left(\bm{u}^{(k)}, \bm{b}; \bm{\theta}\right) \quad\mbox{for}\quad k=0, \ldots, K-1,
\end{equation}
where $k$ is the stage index and $f_\theta^{(k)}(\cdot,\bm b;\bm{\theta})$ represents a nonlinear transformation, such as convolution followed by the application of a nonlinear activation function.

In fact, different stages can share the same weights, such that $f_\theta^{(k)}(\cdot,\bm b;\bm{\theta})$ at each stage remains the same, i.e., $f_\theta^{(k)}(\cdot,\bm b;\bm{\theta}) = f_\theta(\cdot,\bm b;\bm{\theta})$ for all $k$. The equilibrium or fixed-point-based networks introduced in \cite{bai2019deep,gilton2021deep} used weight-sharing into the unrolling methods, which has demonstrated the capability to attain competitive performance. The main concept was to represent the output of a feed-forward neural network model as a fixed-point of a nonlinear transformation, which enabled the use of implicit differentiation for the back-propagation. Let us consider the $K$-stage network model with the input $\bm b$ and weights $\bm{\theta}$. The output of the $(k+1)$-th hidden layer can be denoted as $\bm{u}^{(k+1)}$ obtained by the following recursion process
\begin{equation}
\bm{u}^{(k+1)}=f_\theta\left(\bm{u}^{(k)}, \bm{b};\bm{\theta}\right).\label{equilibrium}
\end{equation}
The limit of $\bm{u}^{(K)}$ as $K \rightarrow \infty$, provided it exists, is a fixed point of the operator $f_\theta(\cdot, \bm{b};\bm{\theta})$. The series converges if the spectral norm of the Jacobian $\partial_u f_{\theta}$ is strictly less than 1 for any $\bm u$, which is true if and only if the iteration map $f_{\theta}$ is contractive. The deep equilibrium model (DEM) serves as a bridge between the conventional fixed-point methods in numerical analysis and learning-based techniques for solving the inverse problems.
We illustrate the general diagram of the unrolling algorithm based on gradient descent of the low-level minimization problem in Figure \ref{Unrolling}, while represent the structure of deep equilibrium model in Figure \ref{DEM}. As can be shown, the difference between deep unrolling model and deep equilibrium model lies in the iterative regime, where deep equilibrium model shares the weights for all stage. Indeed, we use the deep neural networks to learn the gradient operator of the data fidelity and regularization used in the gradient descent scheme.

\begin{figure*}[t]
      \centering
      \subfigure{
            \includegraphics[width=1.0\linewidth]{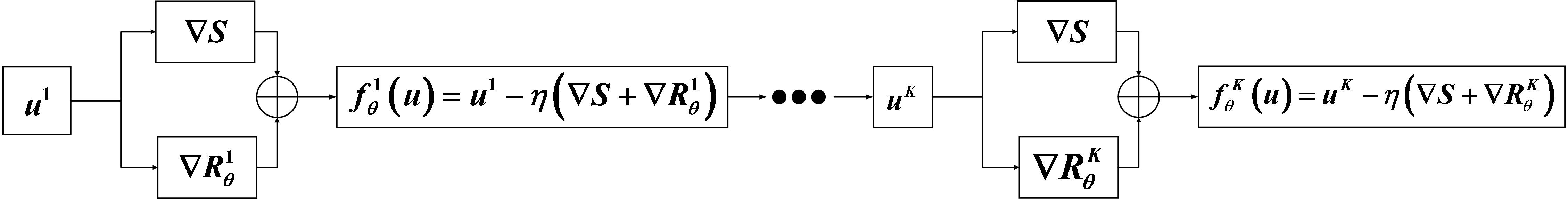}}
      \caption{Depiction of a typical unrolling method using the gradient decent.}
	\label{Unrolling}
\end{figure*}

\begin{figure*}[t]
      \centering
      \subfigure{
            \includegraphics[width=0.6\linewidth]{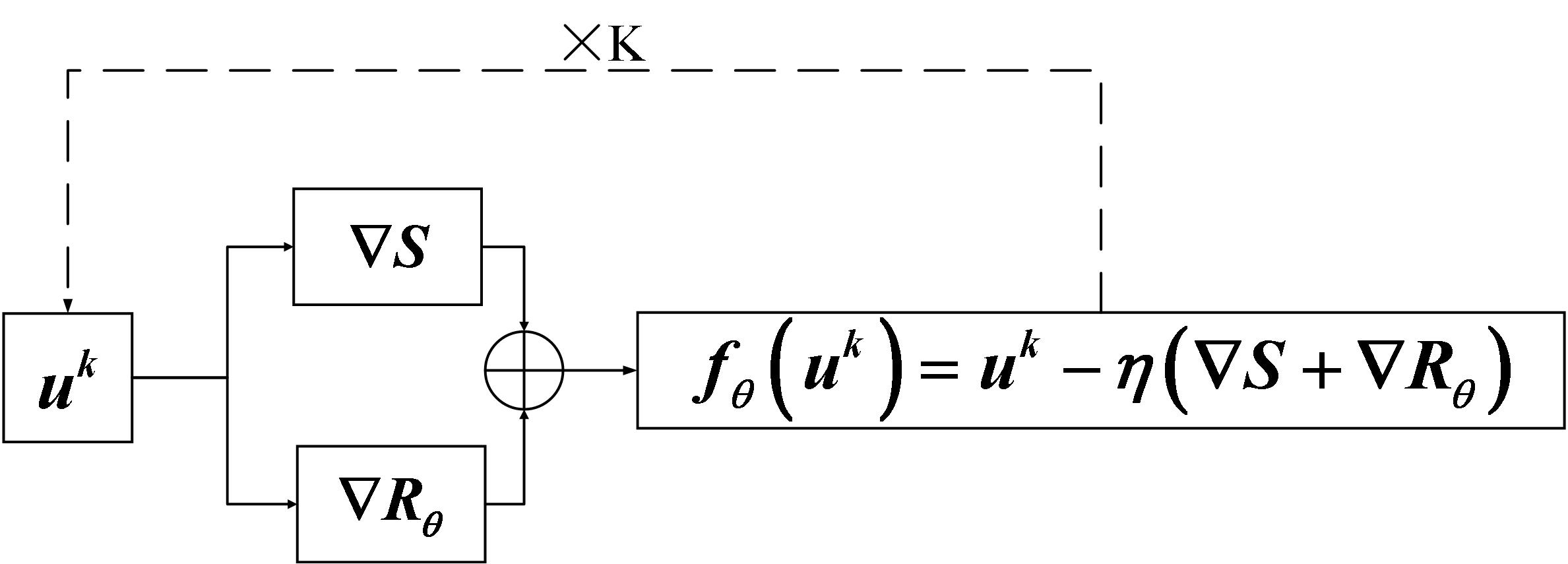}}
      \caption{Depiction of a typical deep equilibrium method using the gradient decent.}
	\label{DEM}
\end{figure*}

\subsection{Low-resolution Prior for CT reconstruction}
For severely degraded CT reconstruction problems such as sparse-view and limited-angle, it is necessary to combine effective prior information to improve the quality of the reconstructed images. Different kinds of prior knowledge have been investigated for incomplete data reconstruction problems. The non-local similarity has been shown highly beneficial for enhancing fine details in reconstructed CT images \cite{huang2011sparse,kim2016non,gilboa2009nonlocal,min2024non}. Both low-rank prior and dictionary learning have been used to leverage the sparsity of the image in a transform domain  \cite{lustig2007sparse,chen2008prior,ravishankar2010mr}. Deep learning methods by integrating geometric priors have also been studied in \cite{shen2022geometry,lin2019dudonet,wu2017iterative,zhu2018image}. Besides, personalized information as a prior has also been demonstrated to significantly enhance the reconstruction results \cite{ghosh2022towards,grandinetti2022mr}.

In CT imaging, the problem of reconstructing the object $\bm u$ from the measured sinogram $\bm b$ is ill-posed due to the finite dimension of measured rays and the infinite-dimensional nature of the unknown object $\bm u$. We can parameterize the object by utilizing a finite-series expansion as follows
\begin{equation}
\label{discrectbase}
\bm{u}(x, y)=\sum_{j=1}^{N^{2}} u_j e_j(x,y),
\end{equation}
where $\{e_j(x,y)\}$ are basis functions that should be linearly independent. The square pixels are the most commonly used basis function defined by
\[e_j(x,y) = rect\Big(\frac{x-\tilde{x}_j}{\Delta x}\Big)rect\Big(\frac{y-\tilde{y}_j}{\Delta y}\Big),\]
where
\begin{equation*}
rect(t)\triangleq 1_{\{|t|\leq1/2\}} = \left\{
\eqalign{
1 & , \quad |t|\leq1/2, \\
0 & , \quad \mbox{otherwise},
}
\right.
\end{equation*}
and ($\tilde{x}_j$,$\tilde{y}_j$) denotes the coordinates of the $j$-th pixel, ($\Delta x$, $\Delta y$) denotes the size of each pixel. The length of the $i$th ray passing through the pixel $u_j$ is denoted by $a_{ij} = \int e_j(x, y) \mathrm{d} \ell$, where all $a_{ij}$ constitute the system matrix $\bm A$ in \eref{model1}. It is well-known that using a finer grid can improve image resolution, but it also leads to an increase in the number of unknowns, which exacerbates the ill-posedness of the inverse problem and increases sensitivity to disturbances \cite{hansen1998rank}. We can achieve the goal of reducing the number of unknowns by increasing the size of pixels/voxels through modifying the basis functions  \eref{discrectbase}.
Actually, there have been studies on the impact of resolutions on CT reconstruction.  In \cite{stearns2006efficient}, a low-resolution full-view image was initially reconstructed, followed by a fine-resolution reconstruction of a specific region of interest (ROI). In particular, the image attributes of the pixels inside the ROI were set to zero, and a new projection sinogram was calculated using the forward model represented in matrix form by \eref{model1}. Afterward, the newly calculated sinogram was subtracted from the original sinogram, resulting in a sinogram that contained projections exclusively corresponding to the ROI. Dabravolski \emph{et al.} \cite{dabravolski2014multiresolution} proposed to obtain an initial reconstruction of a global coarse image, which was subsequently divided into regions comprising both fixed and non-fixed pixels. Specifically, the boundary pixels that shared at least one adjacent pixel with a different attenuation coefficient were assigned to the non-fixed pixels region. The non-fixed pixels region was comprised of boundary pixels that had at least one adjacent pixel with a different attenuation coefficient. Another method in \cite{cao2016multiresolution} involved dividing the image domain into sub-regions with varying levels of discretization, which were characterized by either a coarse or fine pixel size. The projection data obtained from both resolutions were combined to reconstruct an image with different levels of discretization, effectively incorporating information from both measurements to enhance the reconstructed image quality. Gao \emph{et al.} \cite{gao2022lrip} proposed a low-resolution image prior image reconstruction model, which was realized in an end-to-end regime showing impressive ability in dealing with limited-angle reconstruction problems. He \emph{et al.} \cite{he2021downsampled} introduced a downsampling imaging geometric modeling approach for the data acquisition process, which integrated the geometric modeling knowledge of the CT imaging system with prior knowledge obtained from a data-driven training process to achieve precise CT image reconstruction. Malik \emph{et al.} \cite{malik2021fuzzy} suggested employing image fusion to generate a composite image that combines the high spatial resolution of a partially ambiguous image obtained from incomplete data with the more dependable quantitative information of a coarser image reconstructed from the same data in an over-complete problem. The aforementioned methods demonstrate that the low-resolution image can be used as an effective prior for ill-posed CT reconstruction problems.

\subsection{Our contributions}
In this work, we introduce the low-resolution image as the prior and propose a novel image reconstruction model for sparse-view and limited-angle CT reconstruction tasks. More specifically, the low-resolution image prior can introduce effective regularity into the reconstruction model to guarantee the qualities for incomplete data problems. Subsequently, we implement the algorithm unrolling to solve our low-resolution prior image model, where the gradient descent is employed for minimizing the lower-level problem with each Jacobi block being approximated by convolutional neural networks.
To balance feature extraction capability and model size, we establish the low-resolution prior equilibrium (LRPE) network, where the weight-sharing strategy is used among all stages. More importantly, our approach has provable convergence guarantees by satisfying certain conditions to guarantee the iterative scheme converge to a fixed-point. Through extensive numerical experiments on both sparse-view and limited-angle CT reconstruction problems, our LRPE model demonstrates clear advantages over the state-of-the-art learning methods. To sum up, the major contributions of this paper are as follows
\begin{itemize}
    \item By leveraging the characteristics of the physical process of CT imaging, we use the low-resolution image as prior and establish an effective regularization model for incomplete data CT reconstruction.
    \item We employ the deep equilibrium method to solve the low-resolution prior model, where the convergence is established for both explicit data fidelity and learned data fidelity.
    \item We conduct numerical experiments to demonstrate the advantages of the low-resolution image prior and deep equilibrium strategy for incomplete data reconstruction problems.
\end{itemize}

\begin{figure*}[t]
      \centering
      \subfigure[]{
            \includegraphics[width=0.3\linewidth]{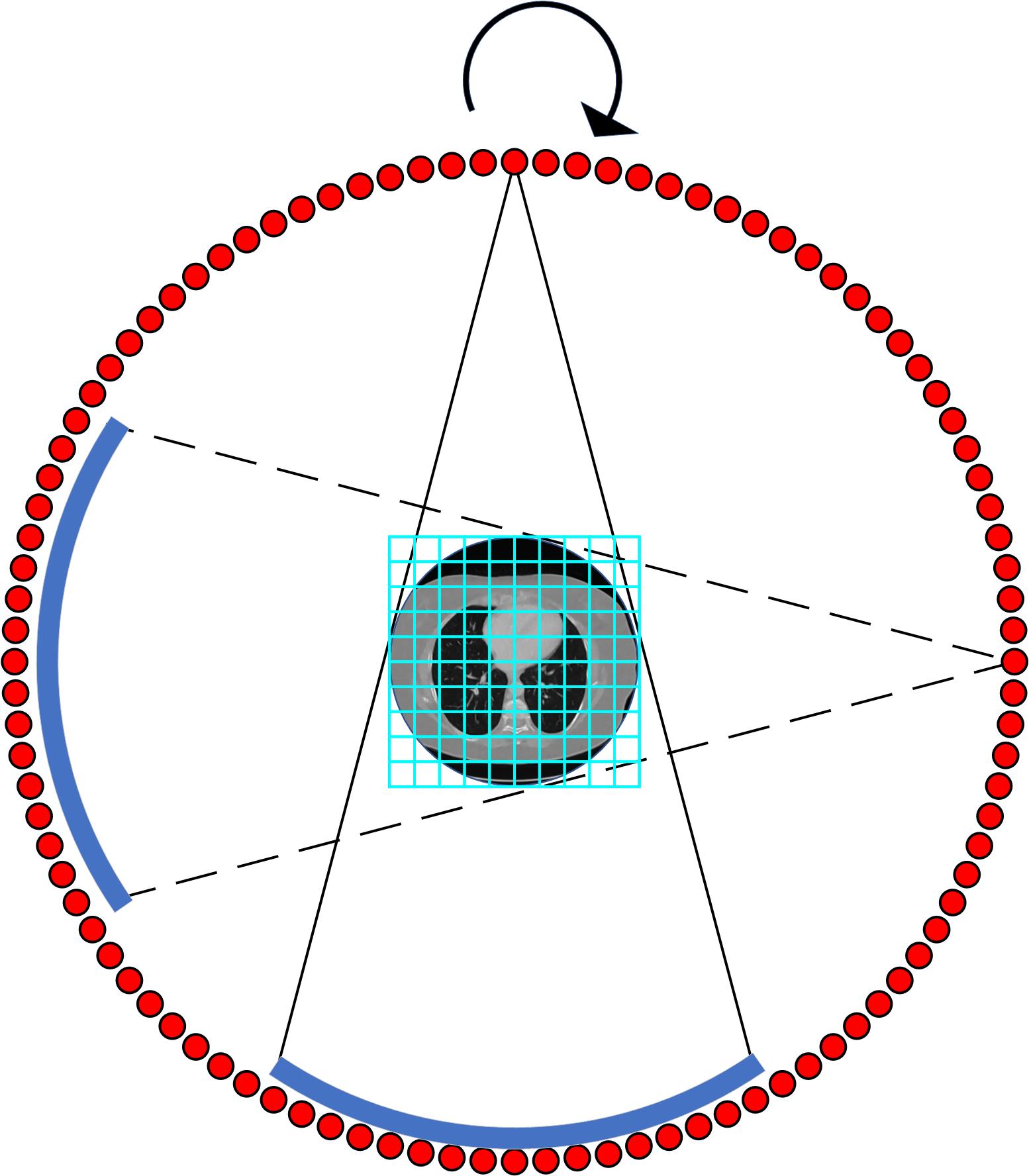}}\qquad
      \subfigure[]{
			\includegraphics[width=0.3\linewidth]{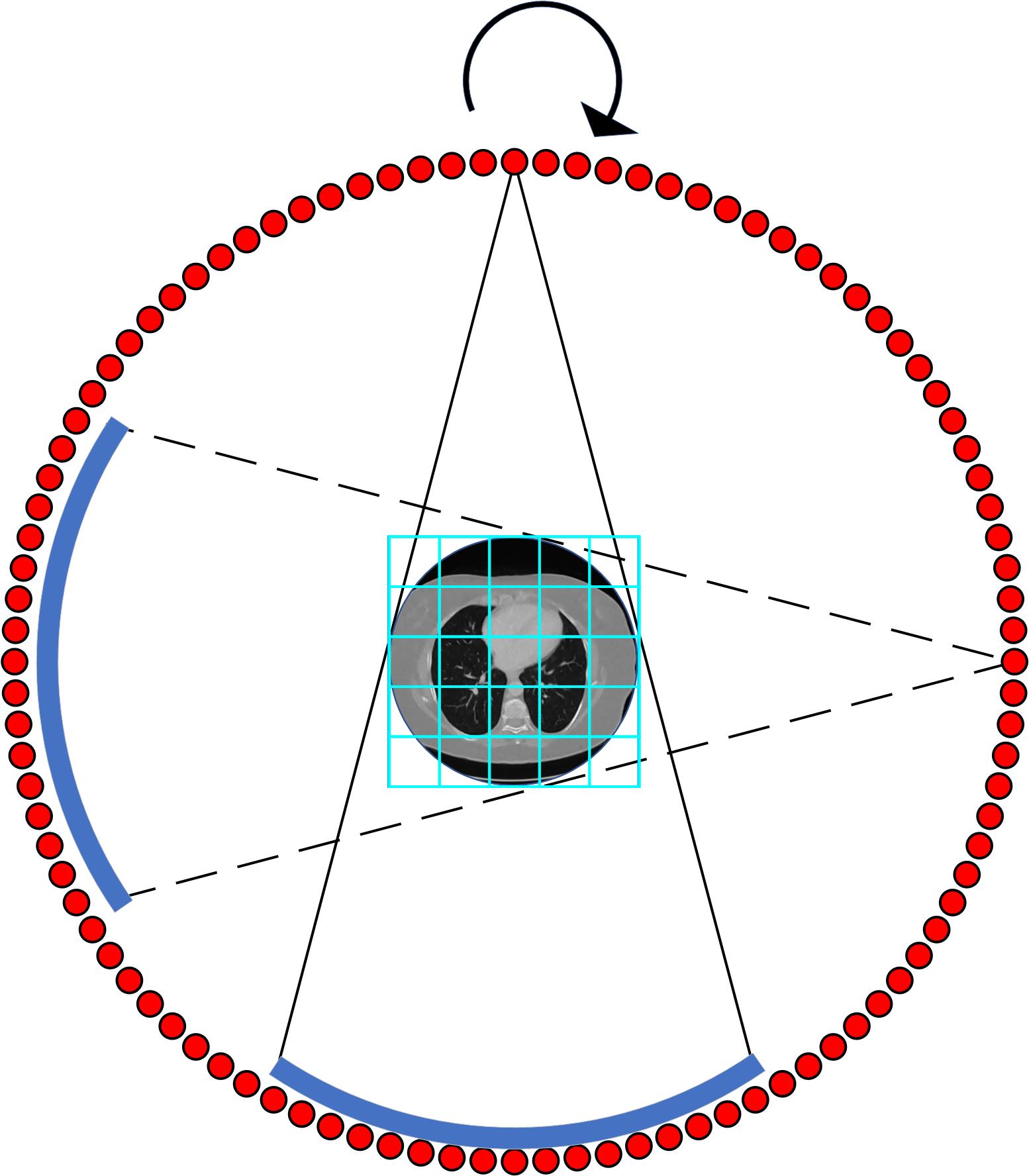}}\\
\caption{Illustration of the virtual CT scanning system, where (a) a fine CT scanning system; (b) a coarse CT scanning system.}
	\label{resolution}
\end{figure*}

\section{Our Low-resolution Prior Reconstruction Model}
\label{sect2}
Let $\bm{A_l}:\mathbb R^{n^2}\rightarrow\mathbb R^{N_{views}\times N_{bins}}$ be the system matrix for the low-resolution image $\bm{u_l}$. Thus, the low-resolution inverse problem can be formulated as follows
\begin{equation}
\label{modell}
\bm{A_l} \bm{u_l} = \bm{b},
\end{equation}
where $n$ denotes the length and width of the low-resolution image. Without considering the discretization error, the same projection data $\bm b$ is used for the low-resolution image.
The illustration of the fine and coarse CT scanning system is provided in Figure \ref{resolution}. As can be observed, the red color dots denote the permissible positions for the emitter, and the blue arc represents the corresponding receiving range of the receiver. It is obviously shown that the coarse scanning system has the same field of view (FOV) but with a larger step size in spatial discretization. That means the low-resolution image can be  obtained from the high-resolution image as follows
\begin{equation}
\label{Downsample}
\bm{u}_l = \bm{D} \bm{u},
\end{equation}
where $\bm D: \mathbb R^{N^2} \rightarrow \mathbb R^{n^2}$ is the down-sampling operator.
In this work, we concern with the following low-resolution image regularized CT reconstruction model
\begin{equation}\label{lowlevel}
{\bm{u}} \in \mathop{\arg\min}\limits_{\bm{u}} \mathcal{S}(\bm{Au}, \bm{b})+\mathcal{R}(\bm{u}, \bm{D^{\top}u}_l),
\end{equation}
where $\bm{D^\top}$ represents the transpose of the matrix $\bm{D}$. Supposing both $\mathcal{S}$ and $\mathcal{R}$ are differentiable, we can use the gradient descent to solve the minimization problem \eref{lowlevel}. That is, we start with an initial estimate $\bm{u}^{(0)}$ such as $\bm{u}^{(0)} = 0$  and a step size $\eta > 0$, such that for stage $k = 0, 1,\ldots, K-1$, we estimate $\bm u^{(k+1)}$ from
\[\bm{u}^{(k+1)}=\bm{u}^{(k)} - \eta \Big(\bm{A^{\top}} \nabla{\mathcal{S}}\left(\bm{Au}^{(k)}, \bm{b}\right)+ \nabla \mathcal{R}\left(\bm{u}^{(k)},\bm{D^{\top}u}_l\right)\Big).\]
To leverage the rich information contained in the data, we adopt the deep unrolling strategy to solve the above gradient descent problem. More specifically, we utilize two convolutional neural networks $R_{\theta}(\cdot,\cdot):\mathbb R^{N^{2}}\times \mathbb R^{N^{2}}\rightarrow \mathbb R^{N^{2}}$ and $S_\theta(\cdot,\cdot):\mathbb R^{N_{views}\times N_{bins}}\times\mathbb R^{N_{views}\times N_{bins}} \rightarrow \mathbb R^{N_{views}\times N_{bins}}$ to learn the gradient operator $\nabla\mathcal{R}\left(\bm{u}^{(k)},\bm{D^{\top}u_l}\right)$ and $\nabla \mathcal{S}\left(\bm{Au}^{(k)}, \bm{b}\right)$, respectively.
For $k = 0, \ldots, K-1$, we then have the recursive update as follows
\begin{equation}
\bm{u}^{(k+1)}=\bm{u}^{(k)} - \eta (\bm{A^{\top}}s^k+ t^k),\label{unrolling}
\end{equation}
where $s^k:= S_{\theta}(\bm{Au}^{(k)}, \bm{b})$ and $t^k:= R_\theta(\bm{u}^{(k)},\bm{D^{\top}u_l})$ are the estimations of the two convolutional neural network models, respectively.
Moreover, we use the weight-sharing scheme to improve the feature abstraction capabilities by unrolling the network to a sufficient depth.
Therefore, the final update scheme is achieved by connecting the unrolled gradient descent stages \eref{unrolling} with the deep equilibrium architecture \eref{equilibrium}, as follows
\begin{equation}
f_{\theta}(\bm{u}^{k+1}; \bm{b}):=\bm{u}^k - \eta \Big(\bm{A}^{\top}S_{\theta}(\bm{Au^k}, \bm{b})+ R_\theta(\bm{u}^k,\bm{D^{\top}u}_l)\Big).\label{equnrolling}
\end{equation}
The objective function in the upper-level problem is simply the true risk for the squared loss
\[\ell(\bm u, \bm u^*) = \frac{1}{m}\sum_{i=1}^{m}\|{{{\bm u_i}}^{*}}-{\bm u_i}^{K}\|_2^2,\]
where ${\bm u_i}^{*}$ is the $i$th ground truth, and ${\bm u}^{K}_i$ denotes the reconstructed image.
Indeed, the use of the mean squared error (MSE) loss is not necessary, as any differentiable loss function can be used.

\section{Algorithm Implementation}

The low-resolution prior equilibrium (LRPE) model can be outlined as Algorithm \ref{algorithm1}. In both training and inference stage, the low-resolution prior image $\bm u_l$ is obtained by established reconstruction methods, where we used the Learned Primal-Dual network \cite{adler2018learned} to estimate the low-resolution image in our work.

\begin{algorithm}[h]
    \caption{Low-Resolution Prior Equilibrium (LRPE) Network}
    \begin{algorithmic}[1]\\
    \textbf{Step 0.} Initialize $\bm{u}^0$, $\bm{b}$, $\bm{u_l}$, step size ${\eta}$;\\
    \textbf{Step 1.} For $k = 0,1,2,\cdots,K-1$ do
    \begin{equation*}
    \left\{\begin{array}{l}
    t^{k} \longleftarrow S_{\theta}(\bm{u}^k, \bm{D^{\top} u_l});  \\
    s^{k} \longleftarrow R_{\theta}(\bm{A u}^k, \bm{b});  \\
    \bm{u}^{k+1} = \bm{u}^k - \eta(\bm{A}^{\top}s^k + t^k);\\
    \end{array}\right.
    \end{equation*}
    \State \textbf{Step 2.} Return $\bm{u}^{K}$.
  \end{algorithmic}
  \label{algorithm1}
\end{algorithm}

In implementation, our algorithm was realized using the Operator Discretization Library (ODL), the Adler package, the ASTRA Toolbox, and TensorFlow 1.8.0. In particular, the TensorFlow is a toolkit designed for tackling complex mathematical problems, for which the calculations are represented as graphs, mathematical operations are nodes, and multidimensional data arrays are communicated as edges of the graphs. The ASTRA Toolbox is a MATLAB and Python toolbox that provides high-performance GPU primitives for 2D and 3D tomography. And ODL is a Python library focused on fast prototyping for inverse problems. Adler is a toolkit that facilitates the efficient implementation of neural network architectures.

\subsection{Network architecture}

Figure \ref{network} depicts the network structures, which is an end-to-end CT reconstruction method. There are two Resnet blocks in each stage of the low-resolution prior reconstruction. As shown at the bottom of Figure \ref{network}, each block involves a 3-layer network.
Then the overall depth of the network is determined by the number of stages it contains, which is fixed to strike a balance between the receptive fields and the total number of parameters in the network. The inclusion of residual structures \cite{he2016deep} is crucial, which can help prevent the loss of fine details, expedite information flow, and enhance the network's capacity. The residual structures can alleviate training challenges and expedite convergence. We adopt the log-student-t-distribution function as the loss function
\begin{equation}
    \phi({x}) = \frac{1}{2} \log(1+{x}^2),\label{activation}
\end{equation}
which is differentiable and suitable for modeling the statistics of natural images.
We set the numbers of channels in each stage as $6\rightarrow32\rightarrow32\rightarrow5$ for both $S_{\theta}$ and $R_{\theta}$. The convolution operator plays a vital role in feature extraction, where we set the size of the convolution kernel to 3 in implementation. The convolution stride is set to 1 and the SAME padding strategy is used in implementation. Additionally, we employ the Xavier initialization scheme for the convolution parameters and initialize all biases to zero.
\subsection{Network optimization}
In our network, the parameters are updated using the backpropagation algorithm within the stochastic gradient descent method implemented in TensorFlow. To ensure a fair comparison, we set most experimental parameters to be the same as those used in the other comparative algorithms. We employ the adaptive moment estimation (Adam) optimizer to optimize the learning rate with the parameter $\beta$ set to 0.99 and the remaining parameters set to their default values. The learning rate follows a cosine annealing strategy, which helps to improve training stability. The initial learning rate is set to $10^{-4}$ and global gradient norm clipping is performed by limiting the gradient norm to 1. Besides, the batch size is set to 1 for all experiments.

\begin{figure*}[t]
      \centering			
            \includegraphics[width=1.0\linewidth]{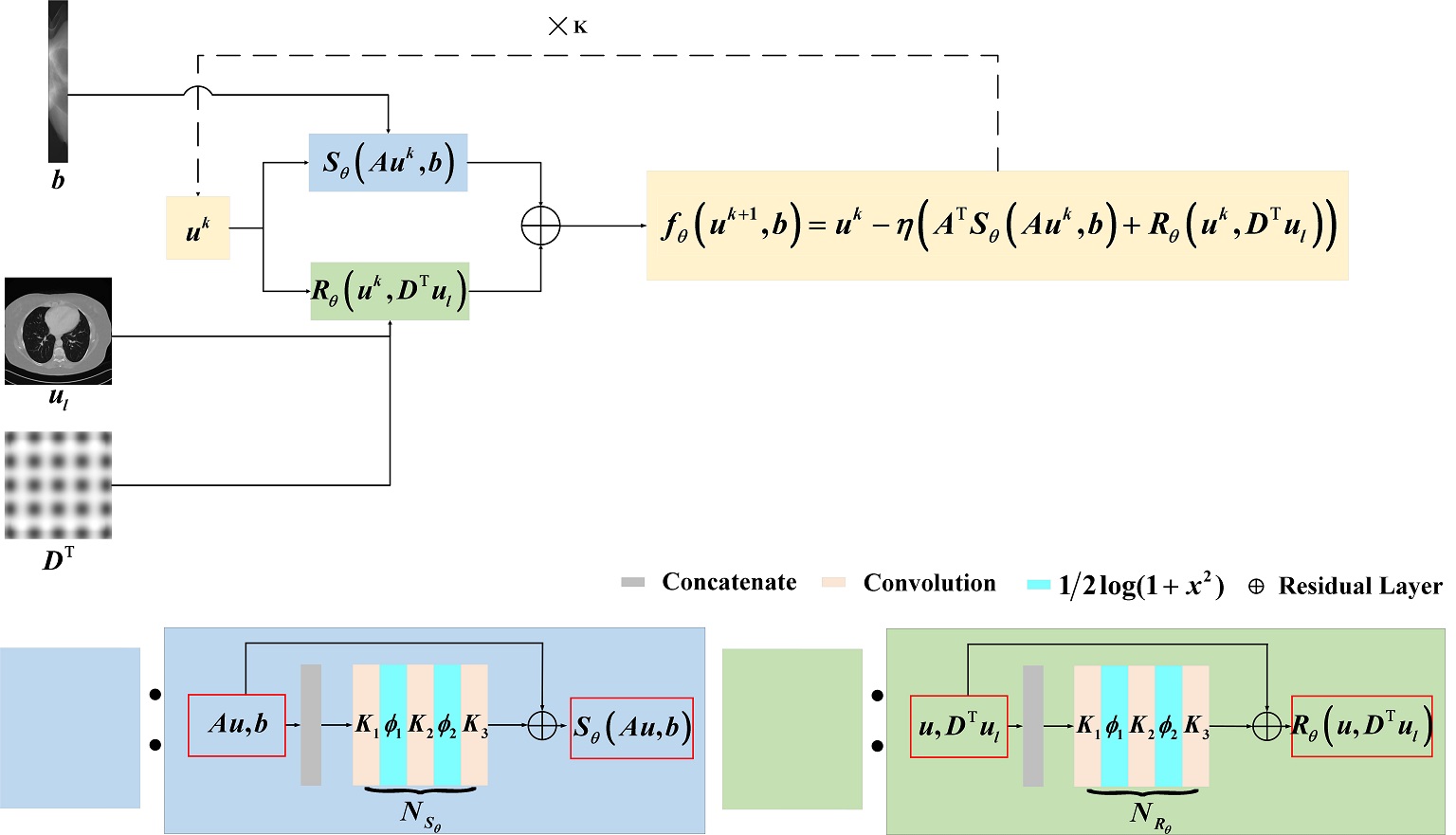}
      \caption{The network structure of our LRPE: Low-Resolution Prior Equilibrium network.}
	\label{network}
\end{figure*}

\section{Convergence Analysis}
According to the classical fixed-point theorem \cite{granas2003fixed}, the convergence of the iterates to a unique fixed-point is guaranteed if the iteration map $f_{\theta}(\cdot; \bm{b})$ is contractive, i.e. if exists a constant $0 \leq c < 1$ such that $\left\|f_{\theta}(\bm{u}; \bm{b}) - f_\theta(\bm{u}^{\prime}, \bm{b})\right\| \leq c\left\|\bm{u} - \bm{u}^{\prime}\right\|$ for all $\bm{u}$ and $\bm{u}^{\prime}$. Now we establish conditions that ensure the convergence of the iterates $\bm{u}^{(k+1)}=f_\theta\left(\bm{u}^{(k)}; \bm{b}\right)$ in Algorithm \ref{algorithm1} to a fixed-point $\bm{u}^{(\infty)}$ as the stage index $k$ approaches infinity. More specifically, we establish conditions on the regularization network $R_\theta$ and the data item network $S_\theta$, respectively. These conditions guarantee that the resulting stage map satisfies the contractivity condition and, consequently, the fixed-point iterations for these models converge.

In accordance with \cite{chan2016plug}, in the absence of noise, we make the assumptions that the regularization network $R_\theta$ and the data item network $S_\theta$ meet the following conditions: there exist positive values $\epsilon_1$ and $\epsilon_2$ such that for any $\bm{u}, \bm{u}^{\prime} \in \mathbb{R}^{N^{2}}$, the following inequality holds
\begin{equation}
\eqalign{
\left\|\left(R_\theta-\bm{I}\right)(\bm{u},\bm{D^{\top}u_l})-\left(R_\theta-\bm{I}\right)\left(\bm{u}^{\prime},\bm{D^{\top}u_l}\right)\right\| \leq \epsilon_1\left\|\bm{u}-\bm{u}^{\prime}\right\|,\\
\left\|\left(S_\theta-\bm{I}\right)(\bm{A u},\bm{b})-\left(S_\theta-\bm{I}\right)\left(\bm{A u}^{\prime},\bm{b}\right)\right\| \leq \epsilon_2\left\|\bm{u}-\bm{u}^{\prime}\right\|.}\label{lipuxis}
\end{equation}
In other words, we make the assumption that the map $R_\theta - \bm{I}$ is $\epsilon_1$-Lipschitz and $S_\theta - \bm{I}$ is $\epsilon_2$-Lipschitz. In practice, the functions $R_\theta$ and $S_\theta$ are implemented using a residual ``skip-connection" structure, such that $R_\theta = \bm{I} + N_{R_{\theta}}$ and $S_\theta = \bm{I} + N_{S_{\theta}}$. Therefore,  the inequalities in \eref{lipuxis} are equivalent to assume that the trained networks $N_{R_{\theta}}$ and $N_{S_{\theta}}$ are $\epsilon_1$-Lipschitz and $\epsilon_2$-Lipschitz, respectively. As shown in Figure \ref{network}, we have
\begin{equation*}
\eqalign{
\left\|\partial_u N_{R_{\theta}} \right\|= \left\|K_1^{\top} \partial_u\phi_1(K_1 \bm{u})K_2^{\top}\partial_u\phi_2(K_2\phi(K_1 \bm{u}))K_3\right\|,\\
\left\|\partial_u N_{S_{\theta}}\right\| = \left\|\bm{A}^{\top}K_1^{\top} \partial_u\phi_1(K_1 \bm{A u})K_2^{\top}\partial_u\phi_2(K_2\phi(K_1 \bm{A u}))K_3\right\|,}
\end{equation*}
due to $\sup _{\bm{u}}\left|\phi^{\prime}(\bm{u})\right|=0.5$, $\left\|\partial_u N_{R_{\theta}} \right\|$ and $\left\|\partial_u N_{S_{\theta}} \right\|$ can be uniformly bounded independently of $\bm{u}$. We have the following convergence result

\begin{theorem}\label{theorem1} (Convergence of LRPE with empirical data fidelity). Assume that the observed data is without noise, there is $\mathcal{S}=\frac{1}{2}||\bm{Au}-\bm{b}||_2^{2}$ in \eref{lowlevel}. Let $L=\lambda_{\max}\left(\bm{A}^{\top} \bm{A}\right)$ and $\mu=\lambda_{\min }\left(\bm{A}^{\top} \bm{A}\right)$, where $\lambda_{\max}(\cdot)$ and $\lambda_{\min }(\cdot)$ denote the maximum and minimum eigenvalue, respectively.
If the step-size parameter $\eta>0$ satisfies $\eta<1 /(L+1)$,  we can obtain the following inequality
\[\left\|f_\theta(\bm{u} ; \bm{b})-f_\theta\left(\bm{u}^{\prime} ; \bm{b}\right)\right\| \leq \underbrace{(1-\eta(1+\mu)+\eta \epsilon_1)}_{=: \gamma}\left\|\bm{u}-\bm{u}^{\prime}\right\|\]
for all $\bm{u}, \bm{u}^{\prime} \in \mathbb{R}^{N^{2}}$. The coefficient $\gamma$ is less than 1 if $\epsilon_1<1+\mu$, in which case the iterates of the LRPE model converges.
\end{theorem}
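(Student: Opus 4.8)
The plan is to verify the contractivity constant $\gamma$ directly by estimating $\|f_\theta(\bm u;\bm b)-f_\theta(\bm u';\bm b)\|$, and then invoke the classical Banach fixed-point theorem quoted at the beginning of this section. First I would specialize \eref{equnrolling} to the empirical data fidelity: since $\mathcal S=\tfrac12\|\bm{Au}-\bm b\|_2^2$ gives $\bm A^{\top}\nabla\mathcal S(\bm{Au},\bm b)=\bm A^{\top}(\bm{Au}-\bm b)$, the stage map becomes $f_\theta(\bm u;\bm b)=\bm u-\eta\big(\bm A^{\top}(\bm{Au}-\bm b)+R_\theta(\bm u,\bm{D^{\top}u}_l)\big)$. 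Forming the difference of two iterates, the affine term $\eta\bm A^{\top}\bm b$ cancels, so that $f_\theta(\bm u;\bm b)-f_\theta(\bm u';\bm b)=(\bm I-\eta\bm A^{\top}\bm A)(\bm u-\bm u')-\eta\big(R_\theta(\bm u,\bm{D^{\top}u}_l)-R_\theta(\bm u',\bm{D^{\top}u}_l)\big)$, where the low-resolution argument $\bm{D^{\top}u}_l$ is held fixed throughout.

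Next I would exploit the residual ``skip-connection'' structure $R_\theta=\bm I+(R_\theta-\bm I)$ to split the network increment into a linear part and a residual part, writing $R_\theta(\bm u,\cdot)-R_\theta(\bm u',\cdot)=(\bm u-\bm u')+\big[(R_\theta-\bm I)(\bm u,\cdot)-(R_\theta-\bm I)(\bm u',\cdot)\big]$. Substituting this and collecting the terms acting linearly on $\bm u-\bm u'$ yields $f_\theta(\bm u;\bm b)-f_\theta(\bm u';\bm b)=\big((1-\eta)\bm I-\eta\bm A^{\top}\bm A\big)(\bm u-\bm u')-\eta\big[(R_\theta-\bm I)(\bm u,\cdot)-(R_\theta-\bm I)(\bm u',\cdot)\big]$. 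The triangle inequality together with the assumption \eref{lipuxis} that $R_\theta-\bm I$ is $\epsilon_1$-Lipschitz then gives $\|f_\theta(\bm u;\bm b)-f_\theta(\bm u';\bm b)\|\le\big(\|(1-\eta)\bm I-\eta\bm A^{\top}\bm A\|+\eta\epsilon_1\big)\|\bm u-\bm u'\|$.

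It remains to evaluate the spectral norm of the symmetric matrix $(1-\eta)\bm I-\eta\bm A^{\top}\bm A$. Its eigenvalues are $1-\eta(1+\lambda)$ as $\lambda$ runs over the spectrum of $\bm A^{\top}\bm A$, i.e.\ $\lambda\in[\mu,L]$, so the norm equals $\max\{|1-\eta(1+\mu)|,\,|1-\eta(1+L)|\}$. This is exactly where the step-size hypothesis $\eta<1/(L+1)$ is used: it forces $\eta(1+L)<1$, hence $0<1-\eta(1+L)\le 1-\eta(1+\mu)\le 1-\eta<1$, so both eigenvalues are nonnegative and the maximum of their absolute values is $1-\eta(1+\mu)$. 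Plugging this in gives $\|f_\theta(\bm u;\bm b)-f_\theta(\bm u';\bm b)\|\le(1-\eta(1+\mu)+\eta\epsilon_1)\|\bm u-\bm u'\|=\gamma\|\bm u-\bm u'\|$; since $\eta\epsilon_1\ge0$ we have $\gamma>0$, while $\gamma<1$ is equivalent to $\eta\epsilon_1<\eta(1+\mu)$, i.e.\ $\epsilon_1<1+\mu$. Under this condition $f_\theta(\cdot;\bm b)$ is a contraction on $\mathbb R^{N^{2}}$, so the fixed-point theorem yields a unique fixed point and convergence of the iterates $\bm u^{(k+1)}=f_\theta(\bm u^{(k)};\bm b)$.

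The only genuinely delicate point is identifying the operator norm of $(1-\eta)\bm I-\eta\bm A^{\top}\bm A$ as $1-\eta(1+\mu)$ rather than $|1-\eta(1+L)|$; this requires the sign analysis of the shifted eigenvalues $1-\eta(1+\lambda)$, which is precisely why the hypothesis $\eta<1/(L+1)$ is imposed. The remaining ingredients — cancellation of the data term, the skip-connection decomposition, and the triangle-inequality estimate — are routine.
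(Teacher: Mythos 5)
Your proof is correct and follows essentially the same route as the paper: the same decomposition into $(1-\eta)\bm{I}-\eta\bm{A}^{\top}\bm{A}$ plus the residual of $R_\theta-\bm{I}$, the same triangle-inequality estimate, and the same sign analysis showing that $\eta<1/(L+1)$ makes all eigenvalues $1-\eta(1+\lambda_i)$ positive so the maximum is attained at $\mu$. The only difference is that you estimate the difference of function values directly from the Lipschitz assumption on $R_\theta-\bm{I}$, whereas the paper bounds the spectral norm of the Jacobian $\partial_u f_\theta(\bm{u};\bm{b})$ and passes through $\left\|\partial_u R_\theta(\bm{u})-\bm{I}\right\|\leq\epsilon_1$; your version is marginally cleaner since it does not require differentiability of $R_\theta$.
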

\begin{proof}
Let $f_\theta(\bm{u}; \bm{b})$ be the iteration map for the LRPE with empirical data fidelity. The Jacobian of $f_\theta(\bm{u} ; \bm{b})$ with respect to $\bm{u}$ denoted by $\partial_u f_\theta(\bm{u} ; \bm{b})$, is given by
\[\partial_u f_\theta(\bm{u} ; \bm{b})=\left(\bm{I}-\eta \bm{A}^{\top} \bm{A}\right)-\eta \partial_u R_\theta(\bm{u}),\]
where $\partial_u R_\theta(\bm{u})$ is the Jacobian of $R_\theta$ with respect to $\bm{u}$. To prove $f_\theta(\cdot ; \bm{b})$ is contractive, it suffices to show $\left\|\partial_u f_\theta(\bm{u} ; \bm{b})\right\|<1$ for all $\bm{u} \in \mathbb{R}^{N^{2}}$, where $\|\cdot\|$ denotes the spectral norm. Towards this end, we have
\begin{equation}
\scalebox{0.9}{$
\eqalign{\left\|\partial_u f_\theta(\bm{u} ; \bm{b})\right\| & =\left\|\bm{I}-\eta \bm{A}^{\top}\bm{A} -\eta \partial_u R_\theta(\bm{u})\right\| \\ & =\left\|\bm{I} -\eta \bm{A}^{\top}\bm{A} - \eta \bm{I}+\eta \bm{I} -\eta \partial_u R_\theta(\bm{u})\right\|\\ & \leq\left\|(1-\eta) \bm{I}-\eta \bm{A}^{\top} \bm{A}\right\|+\eta\left\|\partial_u R_\theta(\bm{u})-\bm{I}\right\| \\ & \leq \max _i\left|(1-\eta)-\eta \lambda_i\right|+\eta \epsilon_1,}\label{convergencegrad}
$}
\end{equation}
where $\lambda_i$ denotes the $i$ th eigenvalue of $\bm{A}^{\top} \bm{A}$. Here, we use the assumption that the map $\left(R_\theta-\bm{I}\right)(\bm{u}):=R_\theta(\bm{u})-\bm{u}$ is $\epsilon_1$-Lipschitz to guarantee the spectral norm of the Jacobian $\partial_u R_\theta(\bm{u})-\bm{I}$ to be bounded by $\epsilon_1$. By our assumption $\eta<\frac{1}{1+L}$ with $L:=\max _i \lambda_i$, we have $\eta<\frac{1}{1+\lambda_i}$ for all $i$, which implies $(1-\eta)-\eta \lambda_i>0$ for all $i$. Therefore, the maximum in \eref{convergencegrad} is obtained at $\mu:=\min _i \lambda_i$, which gives
$$
\left\|\partial_u f_\theta(\bm{u} ; \bm{b})\right\| \leq 1-\eta(1+\mu)+\eta \epsilon_1 .
$$
It shows $f_\theta$ is $\gamma$-Lipschitz with $\gamma=1-\eta(1+\mu)+\eta \epsilon_1$, proving the claim.
\end{proof}

\begin{theorem}\label{theorem2}
(Convergence of LRPE with learned data fidelity). Assume that $R_\theta-\bm{I}$ is $\epsilon_1$-Lipschitz and $S_\theta-\bm{I}$ is $\epsilon_2$-Lipschitz, and let $L=\lambda_{\max }\left(\bm{A}^{\top} \bm{A}\right)$ and $\mu=\lambda_{\min }\left(\bm{A}^{\top} \bm{A}\right)$, where $\lambda_{\max }(\cdot)$ and $\lambda_{\min }(\cdot)$ denote the maximum and minimum eigenvalue, respectively. If the step-size parameter $\eta>0$ obeys $\eta<1 /(L+1)$, the LRPE iteration map $f_\theta(\cdot ; \bm{b})$ defined in \eref{equnrolling} satisfies
\[\left\|f_\theta(\bm{u} ; \bm{b})-f_\theta\left(\bm{u}^{\prime} ; \bm{b}\right)\right\| \leq \underbrace{(1-\eta(1+\mu)+ \eta (L\epsilon_2 + \epsilon_1))}_{=: \gamma}\left\|\bm{u}-\bm{u}^{\prime}\right\|\]
for all $\bm{u}, \bm{u}^{\prime} \in \mathbb{R}^{N^{2}}$. The coefficient $\gamma$ is less than 1 if $\epsilon_1 + L\epsilon_2 <1+\mu$,  in which case the iterates of the LRPE-net converges.
\end{theorem}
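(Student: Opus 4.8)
The plan is to follow the template of Theorem~\ref{theorem1}: show that the Jacobian of the stage map $f_\theta(\cdot;\bm{b})$ defined in \eref{equnrolling} has spectral norm bounded uniformly in $\bm{u}$ by the constant $\gamma = 1-\eta(1+\mu)+\eta(L\epsilon_2+\epsilon_1)$, deduce from the mean value inequality that $f_\theta(\cdot;\bm{b})$ is globally $\gamma$-Lipschitz, and then invoke the Banach fixed-point theorem from \cite{granas2003fixed} to conclude that, whenever $\gamma<1$, i.e. whenever $\epsilon_1+L\epsilon_2<1+\mu$, the iterates $\bm{u}^{(k+1)}=f_\theta(\bm{u}^{(k)};\bm{b})$ converge to the unique fixed-point $\bm{u}^{(\infty)}$.

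First I would differentiate \eref{equnrolling} with respect to $\bm{u}$, taking care of the chain rule through the measurement operator sitting inside the data-fidelity network (and noting that the precomputed prior $\bm{D}^{\top}\bm{u}_l$ is a constant, so only the first argument of $R_\theta$ is differentiated). This gives
\[\partial_u f_\theta(\bm{u};\bm{b}) = \bm{I} - \eta\,\bm{A}^{\top}\big(\partial S_\theta\big)(\bm{Au},\bm{b})\,\bm{A} - \eta\,\partial_u R_\theta(\bm{u}),\]
where $(\partial S_\theta)(\bm{Au},\bm{b})$ is the Jacobian of $S_\theta$ in its first argument. The factors $\bm{A}^{\top}$ and $\bm{A}$ flanking $\partial S_\theta$ are the only structural difference from Theorem~\ref{theorem1}. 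Next I would use the residual ``skip-connection'' structure by inserting $\partial S_\theta = \bm{I} + \partial N_{S_\theta}$ and $\partial_u R_\theta = \bm{I} + \partial_u N_{R_\theta}$, and regrouping so that the linear contractive part is isolated:
\[\partial_u f_\theta(\bm{u};\bm{b}) = \big((1-\eta)\bm{I} - \eta\bm{A}^{\top}\bm{A}\big) - \eta\,\bm{A}^{\top}\big(\partial N_{S_\theta}\big)(\bm{Au},\bm{b})\,\bm{A} - \eta\,\partial_u N_{R_\theta}(\bm{u}).\]
Applying the triangle inequality and submultiplicativity of the spectral norm, together with $\|\bm{A}\| = \|\bm{A}^{\top}\| = \sqrt{L}$ (so $\|\bm{A}^{\top}M\bm{A}\|\le L\|M\|$), and the hypotheses that $S_\theta-\bm{I}$ is $\epsilon_2$-Lipschitz and $R_\theta-\bm{I}$ is $\epsilon_1$-Lipschitz (hence $\|\partial N_{S_\theta}\|\le\epsilon_2$, $\|\partial_u N_{R_\theta}\|\le\epsilon_1$), I obtain
\[\left\|\partial_u f_\theta(\bm{u};\bm{b})\right\| \le \left\|(1-\eta)\bm{I} - \eta\bm{A}^{\top}\bm{A}\right\| + \eta L\epsilon_2 + \eta\epsilon_1.\]

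Finally I would control the first term exactly as in \eref{convergencegrad}: since $\eta<1/(L+1)$ and every eigenvalue $\lambda_i$ of $\bm{A}^{\top}\bm{A}$ lies in $[\mu,L]$, we have $(1-\eta)-\eta\lambda_i>0$ for all $i$, so the spectral norm equals $\max_i|(1-\eta)-\eta\lambda_i| = (1-\eta)-\eta\mu = 1-\eta(1+\mu)$. Substituting yields $\|\partial_u f_\theta(\bm{u};\bm{b})\|\le 1-\eta(1+\mu)+\eta(L\epsilon_2+\epsilon_1)=\gamma$ for every $\bm{u}$, which gives the claimed Lipschitz estimate; $\gamma<1$ is then equivalent to $\epsilon_1+L\epsilon_2<1+\mu$, and in that regime the fixed-point iteration converges. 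The part needing the most care — rather than a genuine obstacle — is the bookkeeping in the split above, in particular bounding the learned-data-fidelity term by $L\epsilon_2$ via $\|\bm{A}^{\top}(\partial N_{S_\theta})\bm{A}\|\le\|\bm{A}\|^{2}\epsilon_2$ and reading the Lipschitz hypothesis on $S_\theta-\bm{I}$ as the bound $\|\partial N_{S_\theta}\|\le\epsilon_2$ on the relevant domain. I would also remark that the argument nowhere uses $\mu>0$, so it stays valid for the rank-deficient $\bm{A}^{\top}\bm{A}$ of the sparse-view and limited-angle problems, where the convergence condition simply becomes $\epsilon_1+L\epsilon_2<1$.
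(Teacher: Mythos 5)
Your proposal is correct and follows essentially the same route as the paper's own proof: the same Jacobian computation, the same add-and-subtract regrouping isolating $(1-\eta)\bm{I}-\eta\bm{A}^{\top}\bm{A}$ from the residual terms $\partial_u R_\theta-\bm{I}$ and $\bm{A}^{\top}(\partial_u S_\theta-\bm{I})\bm{A}$, and the same eigenvalue argument showing the linear part has norm $1-\eta(1+\mu)$ under $\eta<1/(L+1)$. Your explicit justification of the $L\epsilon_2$ factor via $\|\bm{A}\|^2=L$ and your closing remark on the rank-deficient case are minor clarifications of what the paper leaves implicit, not a different argument.
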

\begin{proof}
Let $f_\theta(\bm{u}; \bm{b})$ be the iteration map for LRPE with learned data fidelity. The Jacobian of $f_\theta(\bm{u} ; \bm{b})$ with respect to $\bm{u} \in \mathbb{R}^{N^{2}}$, denoted by $\partial_u f_\theta(\bm{u} ; \bm{b})$, is given by
\[\partial_u f_\theta(\bm{u} ; \bm{b})=\left(\bm{I}-\eta \bm{A}^{\top} \partial_u S_\theta(\bm{Au}) \bm{A}\right)-\eta \partial_u R_\theta(\bm{u}),\]
where $\partial_u R_\theta(\bm{u})$ is the Jacobian of $R_\theta$ with respect to $\bm{u} \in \mathbb{R}^{N^{2}}$ and $\partial_u S_\theta(\bm{A u}) $ is the Jacobian of $S_\theta$. Similarly, we show $\left\|\partial_u f_\theta(\bm{u} ; \bm{b})\right\|<1$ for all $\bm{u} \in \mathbb{R}^{N^{2}}$. Since $\left(R_\theta-\bm{I}\right)(\bm{u})$ and $\left(S_\theta-\bm{I}\right)(\bm{u})$ are $\epsilon_1$-Lipschitz and $\epsilon_2$-Lipschitz, respectively, we have the spectral norm of its Jacobian $\partial_u R_\theta(\bm{u})-\bm{I}$ is bounded by $\epsilon_1$ and $\partial_u S_\theta(\bm{u})-\bm{I}$ is bounded by $\epsilon_2$. Then there is
\begin{equation}
\scalebox{0.9}{$
\eqalign{\left\|\partial_u f_\theta(\bm{u} ; \bm{b})\right\| & =\left\|\bm{I}-\eta \bm{A}^{\top} \partial_u S_\theta(\bm{Au}) \bm{A} -\eta \partial_u R_\theta(\bm u)\right\| \\ & =\left\|\bm{I} - \eta \bm{I}+\eta \bm{I} -\eta \bm{A}^{\top} \bm{A} + \eta \bm{A}^{\top} \bm{A} -\eta \bm{A}^{\top} \partial_u S_\theta(\bm{Au}) \bm{A} -\eta \partial_u R_\theta(\bm{u})\right\| \\ & =\left\| \bm{I}-\eta \bm{A}^{\top} \bm{A} - \eta \bm{I} -\eta\left(\partial_u R_\theta(u)-\bm{I}\right) - \eta \bm{A}^{\top}(\partial_u S_\theta(\bm{Au}) - \bm{I})\bm{A} \right\| \\ & \leq\left\|(1-\eta) \bm{I}-\eta \bm{A}^{\top} \bm{A}\right\|+\eta\left\|\partial_u R_\theta(\bm{u})-\bm{I}\right\| + \eta \bm{A}^{\top}\left\|\partial_u S_\theta(\bm{Au})-\bm{I}\right\|\bm{A} \\ & \leq \max _i \|(1-\eta)-\eta \lambda_i\|+\eta \epsilon_1 + \eta L \epsilon_2,}\label{convergence}
$}
\end{equation}
where $\lambda_i$ denotes the $i$ the eigenvalue of $\bm{A}^{\top} \bm{A}$. Finally, by our assumption $\eta<\frac{1}{1+L}$ where $L:=\max _i \lambda_1$, we have $\eta<\frac{1}{1+\lambda_i}$ for all $i$, which implies $(1-\eta)-\eta \lambda_i>0$ for all $i$. Therefore, the maximum in \eref{convergence} is obtained at $\mu:=\min _i \lambda_i$, which gives
\[\left\|\partial_u f_\theta(\bm{u} ; \bm{b})\right\| \leq 1-\eta(1+\mu)+\eta \epsilon_1 + \eta L \epsilon_2.\]
It shows that $f_\theta$ is $\gamma$-Lipschitz with $\gamma=1-\eta(1+\mu)+ \eta (L\epsilon_2 + \epsilon_1)$ , proving the claim.
\end{proof}

\section{Numerical Results}
In this section, we evaluate our LRPE model on both sparse-view and limited-angle reconstruction problems and compare it with several state-of-the-art methods using a dataset of human phantoms. We utilize the peak signal-to-noise ratio (PSNR) and the structural similarity index (SSIM) as evaluation metrics to measure the quality of the reconstructed images produced by different methods.

\subsection{Comparison algorithms}
We employ several established CT reconstruction methods for comparison studies, including both variational and learning-based approaches, as described below:
\begin{enumerate}
\item[$\bullet$]TV model: the TV regularized reconstruction model proposed in \cite{cai2014edge}. We tuned the balance parameter $\lambda\in [1, 3]$, the step size for the primal value $\tau$ within the range of [0.5, 0.9], and the step size for the dual value $\sigma$ within the range of [0.2, 0.5]. These parameter settings were adjusted accordingly for different experiments.
\item[$\bullet$]PD: the Learned Primal-Dual network in \cite{adler2018learned}. The network is a deep unrolled neural network with 10 stages. The number of initialization channels for both primal and dual values is set to 5. The network parameters are initialized using the Xavier initialization scheme. In all experiments, we employed the mean squared loss as the objective function, which measures the discrepancy between the reconstructed image and the ground truth.
\item[$\bullet$]SIPID: the Sinogram Interpolation and Image Denoising (SIPID) network presented in \cite{yuan2018sipid}. The SIPID network utilizes a deep learning framework and achieves accurate reconstructions by iteratively training the sinogram interpolation network and the image denoising network. The network parameters are initialized using the Xavier initialization scheme, and the mean squared loss is employed as the objective function in all experiments.
\item[$\bullet$]FSR: the Learned Full-Sampling Reconstruction From Incomplete Data in \cite{cheng2019learned}. The FSR-Net is an iterative expansion method that uses the corresponding full sampling projection system matrix as prior information. They employed two separate networks, namely IFSR and SFSR. Specifically, the IFSR and SFSR are utilized for reconstructions using IFS and SFS system matrices, respectively. The number of initialization channels for primal values and dual values is set as 6 and 7, respectively. The loss function is the mean square error of the image domain and the Radon domain with the weight $\alpha$ being 1.
\item[$\bullet$]LRIP: the Low-Resolution Image Prior based Network in \cite{gao2022lrip}. It is a low-resolution image prior image reconstruction model for the limited-angle reconstruction
problems. The number of initialization channels is set as 5 for both primal and dual values. The loss function used in training is a combination of mean squared error (MSE) and structural similarity index (SSIM) calculated in the image domain. The weight parameter $\alpha$ for balancing the two components of the loss function is set to 1.
\item[$\bullet$]GRAD: gradient descent is used instead of the primal-dual algorithm to solve the low-level problem described in equation \eref{lowlevel}. It does not incorporate low-resolution image prior information. The network is a deep unrolled neural network with 10 stages. The number of initialization channels is set to 5. The Xavier initialization and the mean squared loss of the reconstructed image and the ground truth are used in all experiments.
\end{enumerate}

\subsection{Datasets and settings}
In the experiments, we utilize the clinical dataset known as ``The 2016 NIH-AAPM-Mayo Clinic Low Dose CT Grand Challenge" \cite{mccollough2016tu}. This dataset comprises 10 full-dose scans of the ACR CT accreditation phantom. To establish the training dataset, we select 9 of these scans, reserving the remaining 1 scan for evaluation purposes. Consequently, the training dataset consists of 2164 images, each with dimensions of 512$\times$512, while the evaluation dataset comprises 214 images. We set the scanning angular interval to be 1 degree. To assess the performance of the reconstruction methods, we introduce various types of noises into the projected data, which allows us to validate and compare the effectiveness of the different reconstruction techniques.

\begin{figure*}[htbp!]
      \centering
\includegraphics[width=0.45\linewidth]{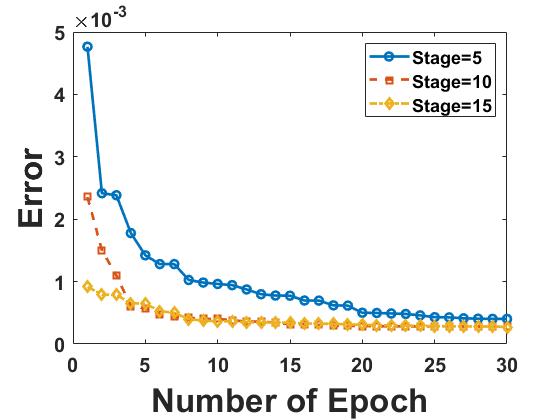}
\includegraphics[width=0.45\linewidth]{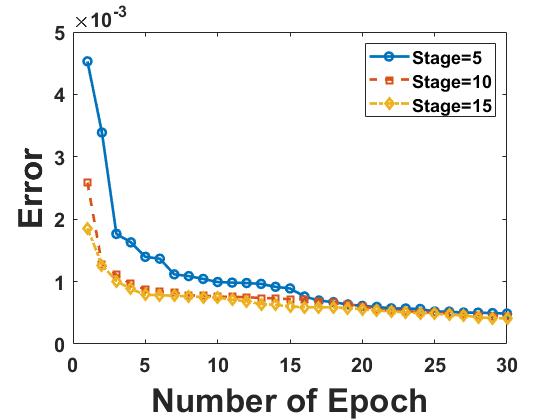}\\
\includegraphics[width=0.45\linewidth]{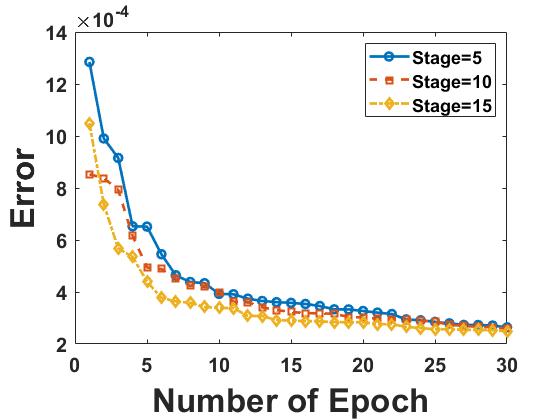}
\includegraphics[width=0.45\linewidth]{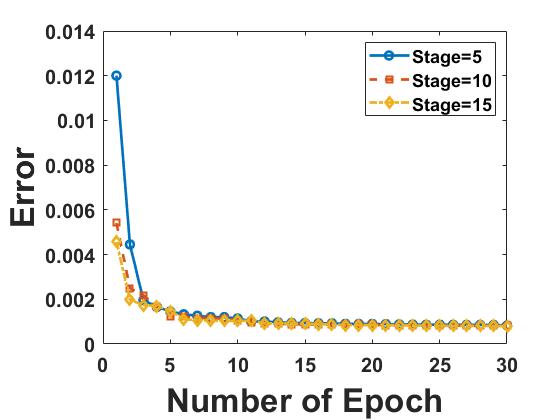}
	\caption{The data error measured by  L2 norm with respect to the numbers of epochs in our LRPE model. The curves represent the results for $150^\circ$ limited-angle (First row)and 60 sparse-view (Second row) with and without 5\% Gaussian noises.  On each row, the left  one is the result without noises, and the right one is the result with 5\% Gaussian noises.}
	\label{curve}
\end{figure*}

\subsection{Test Settings and Parameter Choice}
\label{sect4}
In this subsection, we evaluate the choices of the parameters to the performance of our algorithm.
There are several important parameters in  Algorithm \ref{algorithm1}, including the number of stages $K$, the step size $\eta$ and the dimension of the variables in the network model.

We first study the influence of the stage number $K$ on the convergence of the algorithm. We use the squared error $\|\bm u^K-\bm u^*\|^2$ to measure the data error during the training process. Both the results of $150^\circ$ limited-angle and 60 sparse-view reconstruction with and without 5\% Gaussian noises are presented in Figure \ref{curve}. As we can see, our LRPE model demonstrates convergence as the number of epochs increases, regardless of whether there are 5, 10, or 15 stages (iterations). As Theorem 4.8 in \cite{bottou2018optimization} indicates, convergence to the optimal value can be achieved when the data itself is noise-free. If the observed data contains noises, the expected objective value of the fixed-point iteration converges to a neighborhood of the optimal value. We can observe from the numerical error of our LRPE method converges to lower error for cases without noises. When noises are introduced into observed data, the solution converges to much higher error, which is consistent with the theoretical analysis results in \cite{bottou2018optimization}. With the same number of iterations, more stages can optimize to a lower data error faster. Since there
is a trade-off between the error reduction and training efficiency, we fix
the total number of stages to 10 to achieve a good balance.

On the other hand, we also conducted experiments with different numbers of stages during the inference phase, where the LRPE models trained with 10 stages are used in the experiment. As can be observed in Figure \ref{parameters}(a), as the number of stages increases, the PSNR generally improves for all the limited-angle reconstruction problems.
It is worth noting that by using a parameter-sharing strategy in the deep equilibrium model, the number of parameters is significantly reduced compared to the learned PD model. Specifically, the number of parameters in our model is only 1/10th of the number of parameters in the learned PD model. In what follows, we analyze the effect of the hyper-parameter $\eta$ on the performance of limited-angle reconstruction. We perform a parameter sweep and obtain the results shown in Figure \ref{parameters}(b), which models are used 10 stages. From the results, we observe that the highest PSNR value is achieved when we set $\eta$ as $\eta = 0.1$. Therefore, in all other experiments, we fix the hyper-parameter $\eta$ to a value of 0.1.

\begin{figure*}[htbp!]
      \centering
      \subfigure[Results of the different stages]{
            \includegraphics[width=0.45\linewidth]{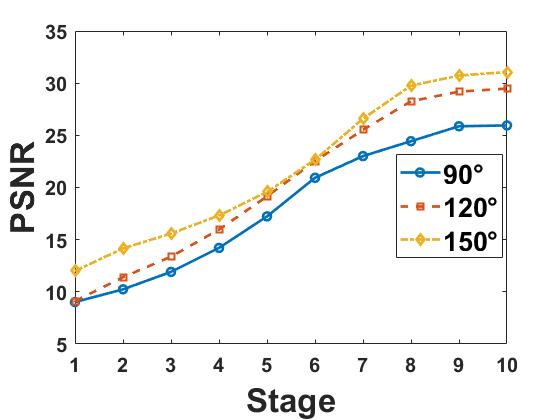}}
      \subfigure[Results of the different step size]{
            \includegraphics[width=0.45\linewidth]{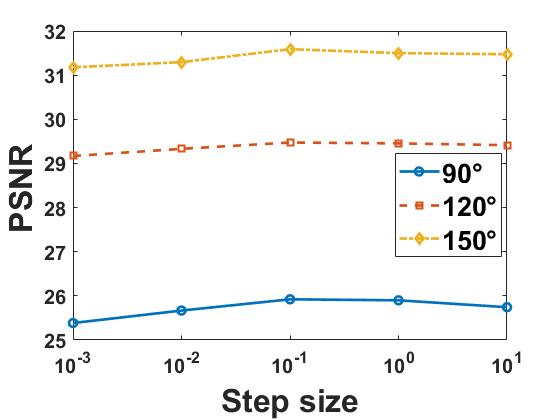}}
      \caption{Evaluation results of our LRPE model with respect to different settings of parameters during inference process.}
	\label{parameters}
\end{figure*}

Expanding the variable space is a common technique used in network optimization to improve stability during the training process. In our case, we expand the variable space by considering multiple variables $\bm{u} = [\bm{u}^{(1)}, \bm{u}^{(2)}, \ldots, \bm{u}^{(N_p)}]$. In Table \ref{dimention}, we investigate the influence of different choices of $N_p$ on the reconstruction accuracy using limited-angle data. From the perspective of results obtained from various views in Table \ref{dimention}, we fix the value of $N_p$ to 5 for all experiments, which provides a good balance between reconstruction accuracy and computational efficiency.
\begin{table*}[t]
\caption{The performance of our LRPE was evaluated in terms of PSNR across different values of $N_p$.}
\centering
\label{dimention}
\begin{tabular}{c|c|c|c|c|c|c}
\hline
\hline
\small \diagbox{Limited-angle}{$N_p$}&1&3&4&5&6&7\\
\hline
\multirow{1}{*}{$90^{\circ}$} &\small 25.6472&\small 25.8506 &\small  25.9506 &\small 25.9517 &\small  25.9129 &\small  \bf{25.9565}   \\
\hline
\multirow{1}{*}{$120^{\circ}$} &\small 29.1946&\small  29.2797 &\small  29.4326 &\small  \bf{29.4835} &\small  29.4718 &\small  29.4815  \\
\hline
\multirow{1}{*}{$150^{\circ}$} &\small 31.2969&\small 31.4487 &\small  31.5884 &\small  \bf{31.636} &\small  31.4477 &\small  31.3321   \\
\hline
\hline
\end{tabular}
\end{table*}

\subsection{Empirical data fidelity or learned data fidelity?}
We evaluate the differences between the empirical data fidelity and learned fidelity on 150$^\circ$ limited-angle reconstruction problems, where the empirical one is chosen as the L2 norm following the assumption in Theorem \ref{theorem1} and the learned one following the assumption in Theorem \ref{theorem2}. Table \ref{mixed} presents the PSNR and SSIM values for both cases, for which the raw data are corrupted by 5$\%$ Gaussian noises and Poisson noises with up to 100, 1000, and 10000 incident photons per pixel before attenuation, respectively. As can be seen, when the mixed noises are contained in the projection data, the model with the learned fidelity outperforms the model with the empirical fidelity. It is attributed to the fact that the L2 norm measures the Euclidean distance, which is not ideal for complex noise distributions in mixed noises. When salt-and-pepper noises are introduced into the measured data, we can observe similar results as shown in Table \ref{salt}. To sum up, the learned data fidelity performs better than the empirical data fidelity term, especially when the noises become significant.
\begin{table*}[!htbp]
\caption{Evaluation results on the $150^{\circ}$ limited-angle reconstruction problem corrupted by the mixed Gaussian and Poisson noises.}
\centering
\label{mixed}
\begin{tabular}{c|c|c|c|c|c|c}
\hline
\hline
    \multirow{2}{*}{\diagbox{Method}{Settings}}      & \multicolumn{2}{c|}{100 photons} & \multicolumn{2}{c|}{1000 photons} & \multicolumn{2}{c}{10000 photons}\\
    \cline{2-7}
    & PSNR  & SSIM  & PSNR  & SSIM & PSNR  & SSIM\\
    \cline{1-7}
    Empirical & 29.6282 & 0.9172    &  30.4864     & 0.9082 &  31.5067     & 0.9412\\
    \cline{1-7}
    Learned & \bf{29.8253} & \bf{0.9194}    &  \bf{30.9255}     & \bf{0.9193} &  \bf{32.2568}     & \bf{0.9446}\\
\hline
\hline
\end{tabular}
\end{table*}

\begin{table*}[!htbp]
\caption{Evaluation results on the  $150^{\circ}$ limited-angle reconstruction problems corrupted by salt-and-pepper noises.}
\centering
\label{salt}
\begin{tabular}{c|c|c|c|c|c|c}
\hline
\hline
    \multirow{2}{*}{\diagbox{Method}{Settings}}      & \multicolumn{2}{c|}{10\% noise} & \multicolumn{2}{c|}{5\% noise} & \multicolumn{2}{c}{1\% noise}\\
    \cline{2-7}
    & PSNR  & SSIM  & PSNR  & SSIM & PSNR  & SSIM\\
    \cline{1-7}
    Empirical & 33.6648 & 0.9412    &  33.9457     & 0.9436 &  34.5948     & 0.9493\\
    \cline{1-7}
    Learned & \bf{33.9674} & \bf{0.9508}    &  \bf{34.4504}     & \bf{0.9586} &  \bf{35.1745}     & \bf{0.9589}\\
\hline
\hline
\end{tabular}
\end{table*}

\begin{figure*}[t]
      \centering
      \subfigure[Clean]{
			\includegraphics[width=0.23\linewidth]{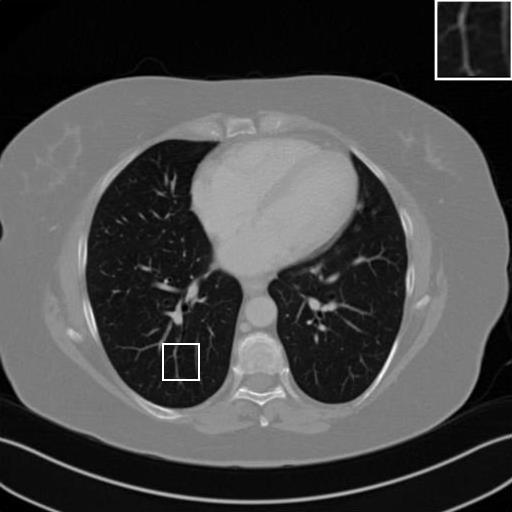}}
      \subfigure[FBP]{
            \includegraphics[width=0.23\linewidth]{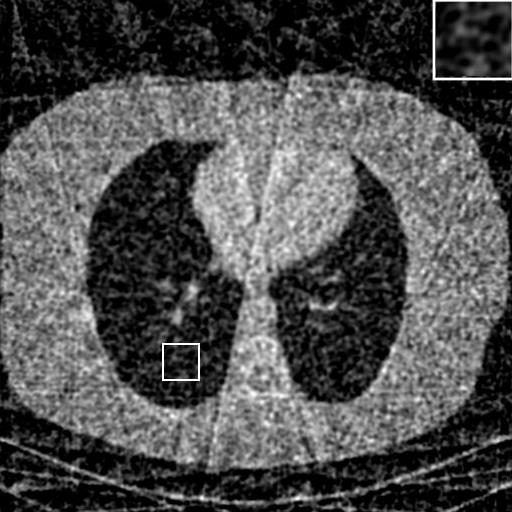}}
      \subfigure[TV]{
			\includegraphics[width=0.23\linewidth]{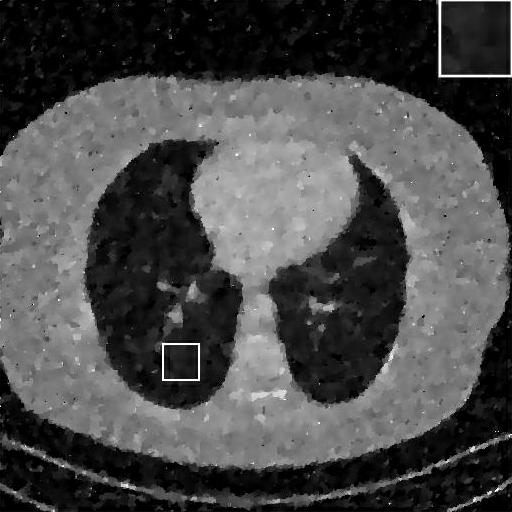}}
      \subfigure[PD-net]{
           \includegraphics[width=0.23\linewidth]{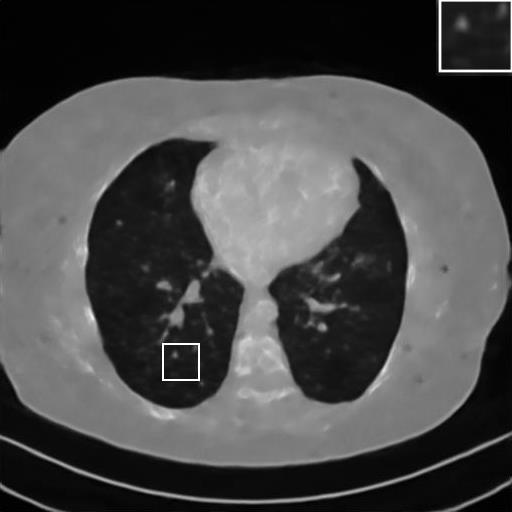}}\\
      \subfigure[IFSR]{
			\includegraphics[width=0.23\linewidth]{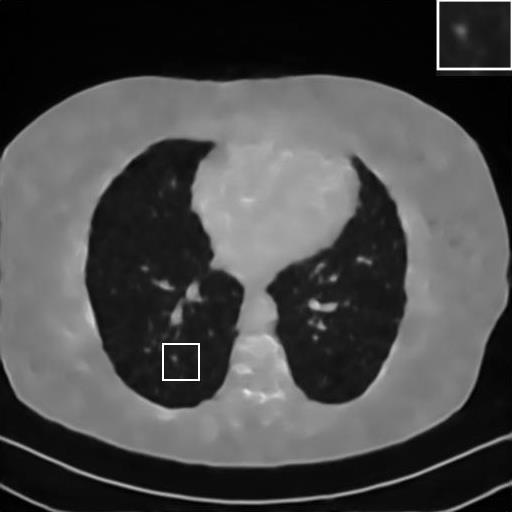}}
      \subfigure[SFSR]{
            \includegraphics[width=0.23\linewidth]{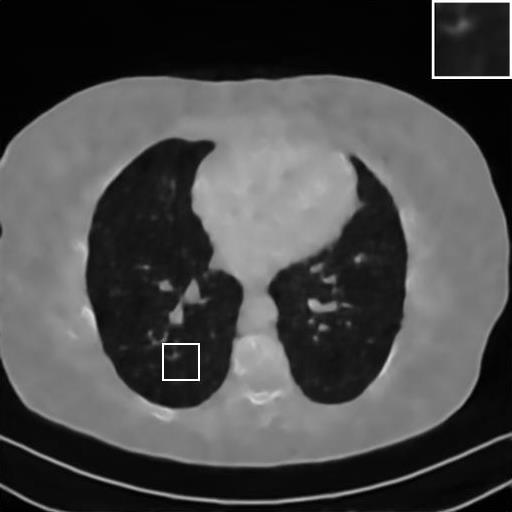}}
      \subfigure[GRAD]{
			\includegraphics[width=0.23\linewidth]{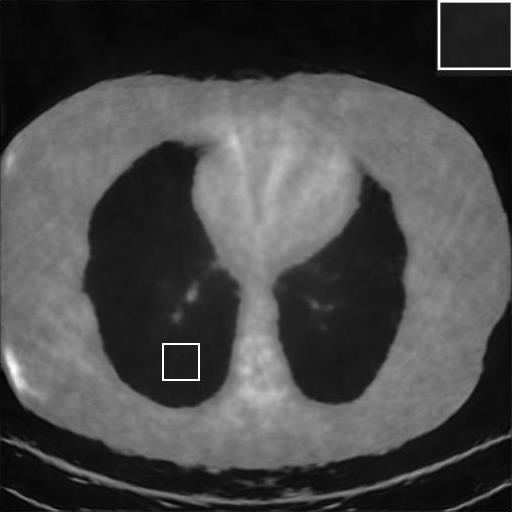}}
      \subfigure[LRPE]{
           \includegraphics[width=0.23\linewidth]{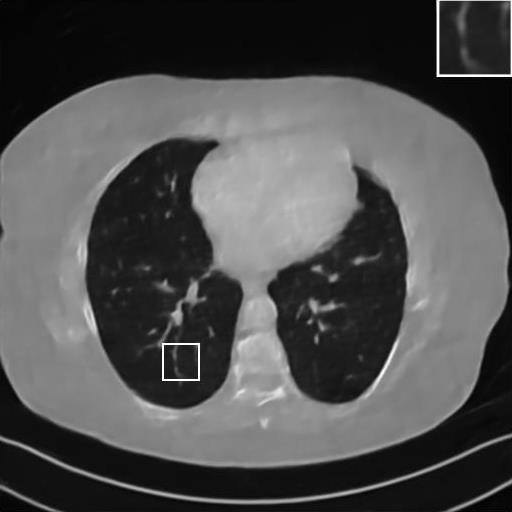}}\\
\caption{The sparse-view reconstruction experiments are performed on the AAPM phantom dataset using 30 views and 5\% Gaussian noise. The display window was set to [0, 1].}
	\label{30}
\end{figure*}

\subsection{Experiments on the sparse-view reconstruction}
We further evaluate the image qualities of our LRPE
model under the low-dose sparse-view projection data. We utilize 60 views, 45 views, and 30 views projection data corrupted by 5\% white Gaussian noises for evaluation, where the reconstruction results of different reconstruction methods are provided in Table \ref{sparse}. Both PSNR and SSIM demonstrate that our LRPE model outperforms other reconstruction methods. Figure \ref{30} presents the reconstructed images obtained by the comparative methods on the 30-view reconstruction problem. By comparing images (g) and (h), it is evident that the low-resolution image prior can significantly improve the reconstruction quality by preserving fine details and sharp edges. Moreover, the zoomed region exhibits that the deep equilibrium architecture can well maintain structural information.

We further increase the noise level in the raw data to 10\% white Gaussian noises and provide the quantitative results in Table \ref{sparse10}. It can be observed that the FBP method performs poorly in the presence of high-level noises, with a decrease in PSNR by 4 dB compared to previous experiments. In contrast, the learning-based methods are demonstrated to be less sensitive to noises. Similarly, our LRPE model exhibits the best performance among all the deep learning-based algorithms.
\begin{table}[htbp]
  \centering
  \caption{Comparison of different methods on sparse-view data corrupted by $5\%$ Gaussian noises in terms of PSNR and SSIM.}
    \begin{tabular}{c|c|c|c|c|c|c|c|c}
    \hline
    \hline
    \small $N_{view}$ & \small Metrics & \small{FBP} &\small{TV} &  \small{PD} & \small{IFSR} & \small{SFSR} & \small{GRAD} & \small{LRPE}\\
    \hline
    \multirow{2}[0]{*}{\small 60} & \small{PSNR}  & 21.4443 & 27.0301 & 30.0261 & 30.4563 & 30.4681 & 26.2635 & \bf{31.0746} \\
    \cline{2-9}
          & \small SSIM  & 0.4828 & 0.8548 & 0.9237 & 0.9321 & 0.9331 & 0.8647 & \bf{0.9337} \\
          \hline
    \multirow{2}[0]{*}{\small 45} & \small PSNR  & 19.7123 & 25.7201  & 29.1028 & 29.6814 & 29.7915 & 24.3606 & \bf{30.8035} \\
    \cline{2-9}
          & \small SSIM  & 0.4061 & 0.8183  & 0.9135 & 0.9253 & 0.9263 & 0.8031 & \bf{0.9342} \\
          \hline
    \multirow{2}[0]{*}{\small 30} & \small PSNR  & 17.8999 & 23.705  & 28.9207 & 29.0632 & 29.5072 & 23.713 & \bf{29.7237} \\
    \cline{2-9}
          & \small SSIM  & 0.3257 & 0.7615  & 0.9167 & 0.9198 & \bf{0.9244} & 0.8355 & 0.921 \\
    \hline
    \hline
    \end{tabular}
  \label{sparse}
\end{table}

\begin{table}[htbp]
  \centering
  \caption{Comparison of different methods on sparse-view data corrupted by $10\%$ Gaussian noises in terms of PSNR and SSIM. }
    \begin{tabular}{c|c|c|c|c|c|c|c|c}
    \hline
    \hline
    \small $N_{view}$ & \small Metrics & \small{FBP} &\small{TV} &  \small{PD} & \small{IFSR} & \small{SFSR} & \small{GRAD} & \small{LRPE} \\
    \hline
    \multirow{2}[0]{*}{\small 60} & \small PSNR  & 17.6258 & 24.6045 & 28.311 & 28.3782 & 28.6113 & 24.4988 & \bf{28.7754} \\
    \cline{2-9}
          & \small SSIM  & 0.277 & 0.7826 & 0.9079 & 0.9147 & 0.9161 & 0.8514 & \bf{0.9174} \\
          \hline
    \multirow{2}[0]{*}{\small 45} & \small PSNR  & 16.1392 & 23.7815 & 27.5534 & 28.0502 & 28.3181 & 23.6734 & \bf{28.6671} \\
    \cline{2-9}
          & \small SSIM  & 0.2236 & 0.752 & 0.9049 & 0.9116 & \bf{0.9141} & 0.8318 & 0.9137 \\
          \hline
    \multirow{2}[0]{*}{\small 30} & \small PSNR  & 14.3605 & 22.9388 & 26.9146 & 27.0682 & 27.653 & 23.0145 & \bf{27.8697} \\
    \cline{2-9}
          & \small SSIM  & 0.166 & 0.7416 & 0.9011 & 0.9026 & \bf{0.9071} & 0.8156 & 0.9043 \\
    \hline
    \hline
    \end{tabular}
  \label{sparse10}
\end{table}

\begin{table}[t]
\footnotesize
  \centering
  \caption{Comparison of different methods on limited-angle data corrupted by $5\%$ Gaussian noises in terms of PSNR and SSIM.}
    \begin{tabular}{c|c|c|c|c|c|c|c|c|c}
    \hline
    \hline
    \small $N_{view}$ & {\small Metrics} & \small{FBP} &\small{TV} &  \small{SIPID} & \small{PD} & \small{SFSR} & \small{LRIP} & \small{GRAD} & \small{LRPE} \\
    \hline
    \multirow{2}[0]{*}{\small $150^{\circ}$} & \small PSNR  & 13.5911 & 25.8815 & 30.3275 & 30.3766 & 30.9411 & 31.5957 & 24.3043 & \bf{31.636} \\
    \cline{2-10}
          & \small SSIM  & 0.4854 & 0.8091 & 0.9276 & 0.9301 & 0.9324 & \bf{0.9426} & 0.8572 & 0.9422 \\
          \hline
    \multirow{2}[0]{*}{\small $120^{\circ}$} & \small PSNR  & 13.4418 & 23.5852 & 27.0428 & 27.1539 & 28.3263 & 29.2763 & 21.6601 & \bf{29.4835} \\
    \cline{2-10}
          & \small SSIM  & 0.4008 & 0.7891 & 0.9024 & 0.9037 & 0.9103 & 0.9361 & 0.801 & \bf{0.9378} \\
          \hline
    \multirow{2}[0]{*}{\small $90^{\circ}$} & \small PSNR  & 13.0314 & 19.9501  & 22.7492 & 22.6047 & 24.2494 & 25.1555 & 19.3469 & \bf{25.9517} \\
    \cline{2-10}
          & \small SSIM  & 0.3881 & 0.6918  & 0.8626 & 0.8612 & 0.8761 & \bf{0.8893} & 0.7538 & 0.8814 \\
    \hline
    \hline
    \end{tabular}
  \label{limite}
\end{table}
\begin{figure*}[h]
      \centering
      \subfigure[FBP]{
            \includegraphics[width=0.23\linewidth]{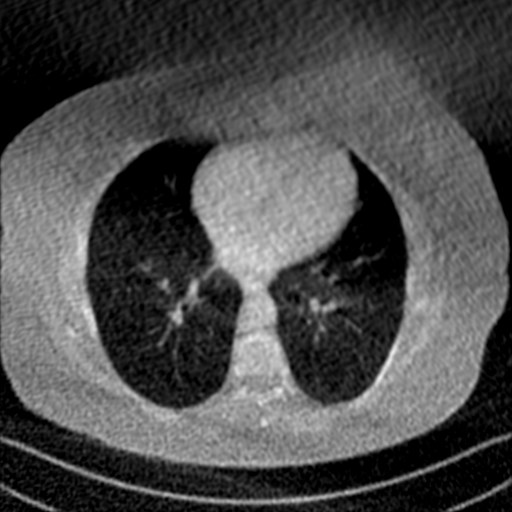}}
      \subfigure[TV]{
			\includegraphics[width=0.23\linewidth]{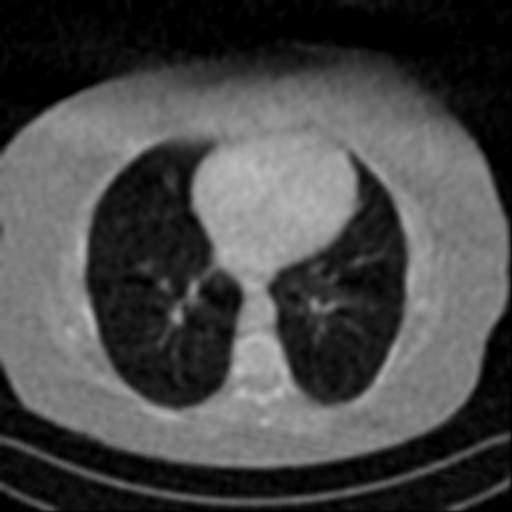}}
      \subfigure[SIPID]{
            \includegraphics[width=0.23\linewidth]{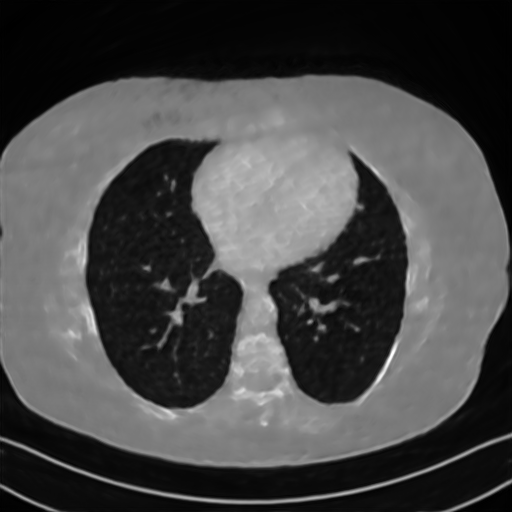}}
      \subfigure[PD]{
			\includegraphics[width=0.23\linewidth]{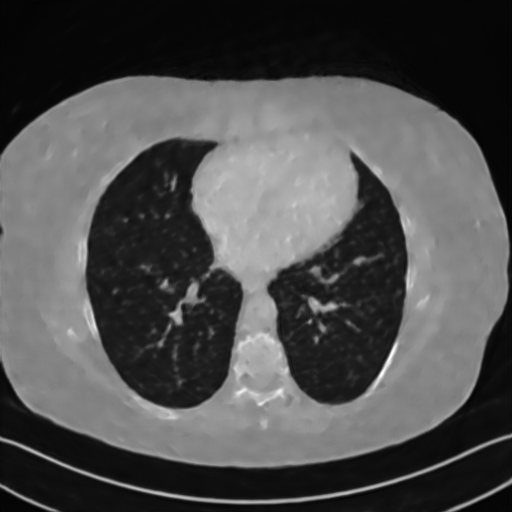}}\\
      \subfigure[SFSR]{
            \includegraphics[width=0.23\linewidth]{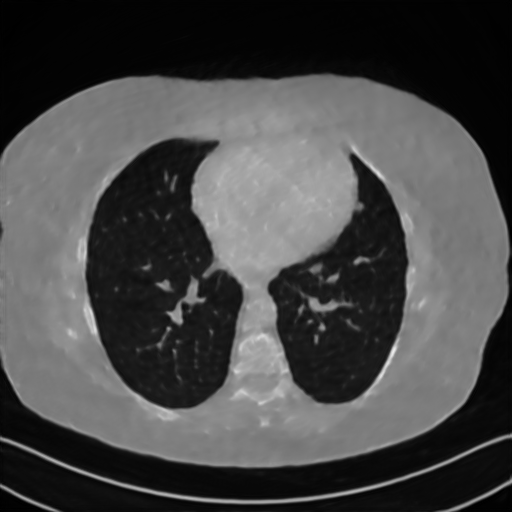}}
      \subfigure[LRIP]{
			\includegraphics[width=0.23\linewidth]{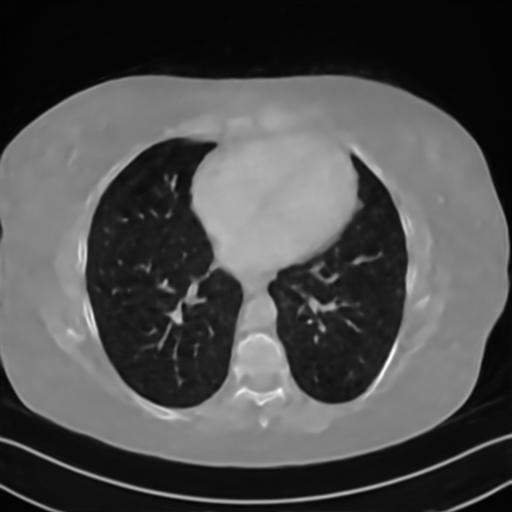}}
      \subfigure[GRAD]{
            \includegraphics[width=0.23\linewidth]{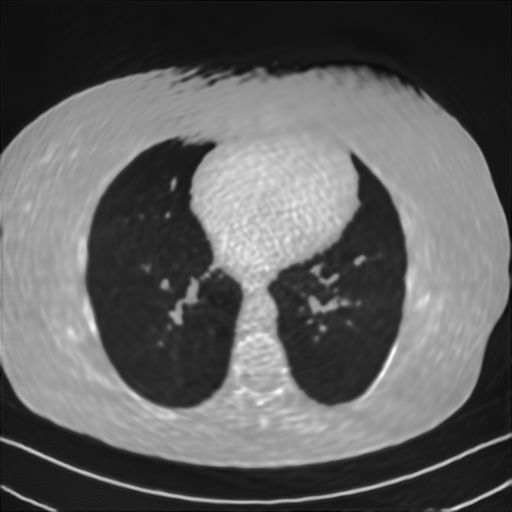}}
      \subfigure[LRPE]{
            \includegraphics[width=0.23\linewidth]{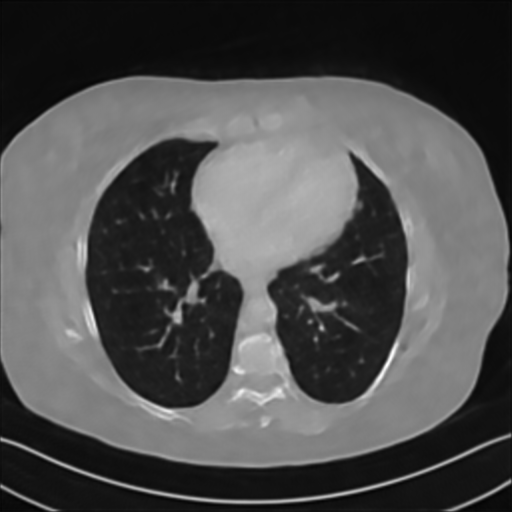}}
\caption{Limited-angle reconstruction experiment with $150^{\circ}$ scanning angular range and 5\% Gaussian noises.}
	\label{reconstruction150}
\end{figure*}

\subsection{Experiments on the limited-angle reconstruction}
In this subsection, we evaluate the performance of our LRPE model on the limited-angle reconstruction, where 5\% Gaussian noises are corrupted in the projection data. Both PSNR and SSIM of the comparison methods are provided in Table \ref{limite}. We can observe that the reconstruction qualities of all methods decrease as the scanning angle shrinks. Not surprisingly, our LRPE model has the numerical advantage compared to other comparison algorithms, which provides a $0.8 \mathrm{~dB}$ higher PSNR than the LRIP on $90^{\circ}$ limited-angle reconstruction task. Particularly, the projection data of the low-resolution prior used by LRIP model is different from ours, which was computed using the down-sampling matrix. Obviously, our setting is more reasonable and in accord with the CT scanner, which also gives better reconstruction results. Figure \ref{reconstruction150} and Figure \ref{90} present the reconstruction results with a scanning range of $150^{\circ}$ limited-angle and $90^{\circ}$ limited-angle, respectively. As can be observed, the learning-based methods outperform both FBP and TV, which contain noticeable artifacts in regions within missing angles. Moreover, our LRPE model surpasses the SIPID, PD, FSR, and LRIP by providing more image details and continuous contours. Thus, both quantitative and qualitative results confirm that the low-resolution image is a suitable prior for the ill-posed limited-angle reconstruction. Furthermore, by comparing the reconstruction results of 150$\circ$ limited-angle and $90^\circ$, it is not difficult to find that the resolution of the reconstruction results of all methods significantly decreases with the loss of angles. Although our method can provide better structural information by introducing the low-resolution image prior, it still suffers details missing due to angle deficient. It is mainly because our low-resolution image is the direct reconstruction result of 90$^\circ$ limited-angle data. Thus, our future work includes to explore effective reconstruction methods for low-resolution images.

In Figure \ref{iterate10}, we present and compare the output results of the LRIP, GRAD, and LRPE models during the iterative process. Although we have shown the numerical convergence of our LRPE model in the inference phase in Figure \ref{parameters} (a), the visual comparison further demonstrates that our model can provide more meaningful intermediate results in the iterative process. Compared to GRAD model, our method can effectively improve the quality of reconstruction results by introducing a low-resolution image prior. On the other hand, the results of the LRIP model are not stable in the first six stages, and only provide meaningful results in the last two stages. Thus, the deep equilibrium model with convergence guarantee is a better choice for establishing a stable end-to-end reconstruction network.

\begin{figure*}[t]
      \centering
      \subfigure[FBP]{
            \includegraphics[width=0.23\linewidth]{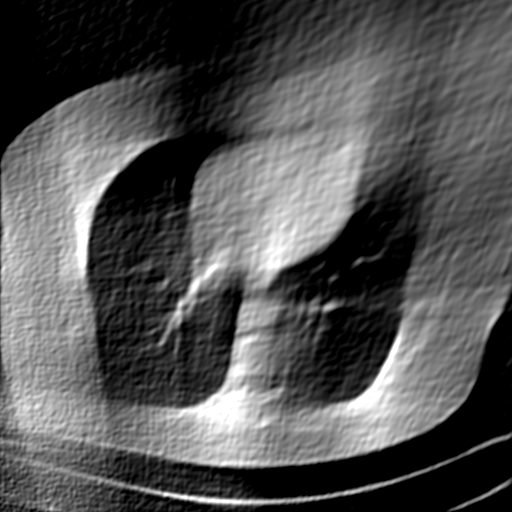}}
      \subfigure[TV]{
			\includegraphics[width=0.23\linewidth]{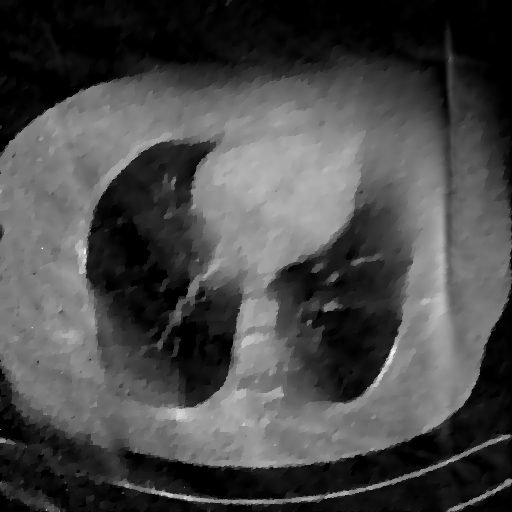}}
      \subfigure[SIPID]{
            \includegraphics[width=0.23\linewidth]{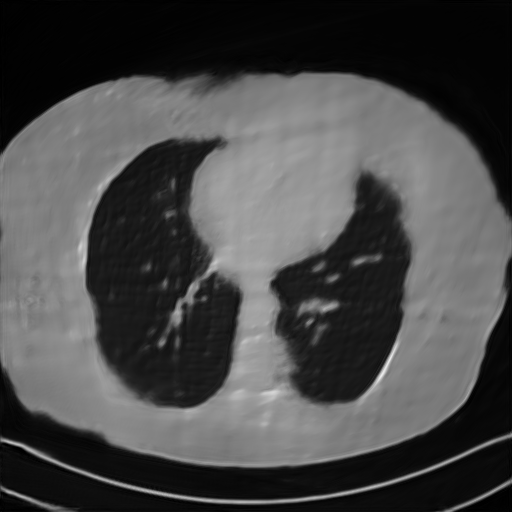}}
      \subfigure[PD]{
			\includegraphics[width=0.23\linewidth]{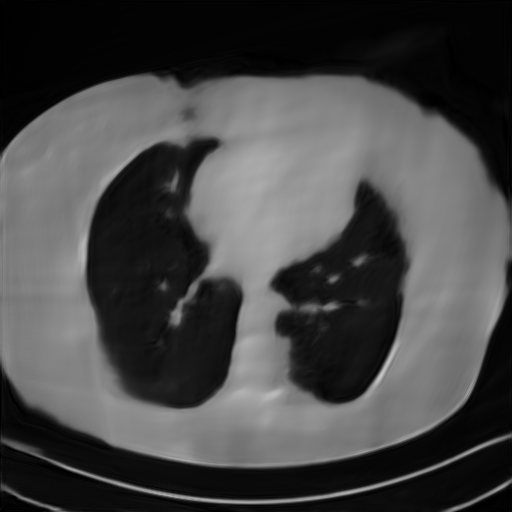}}\\
      \subfigure[SFSR]{
            \includegraphics[width=0.23\linewidth]{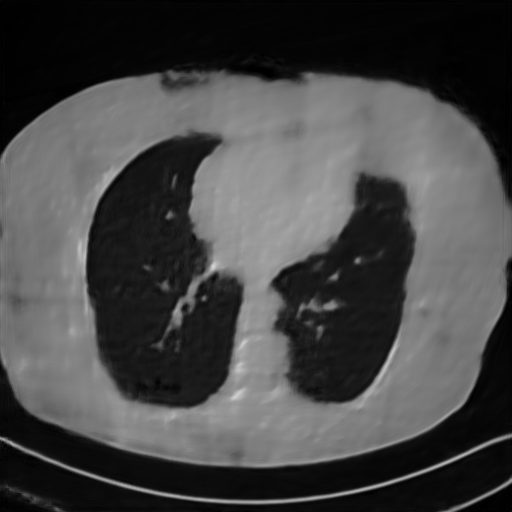}}
      \subfigure[LRIP]{
			\includegraphics[width=0.23\linewidth]{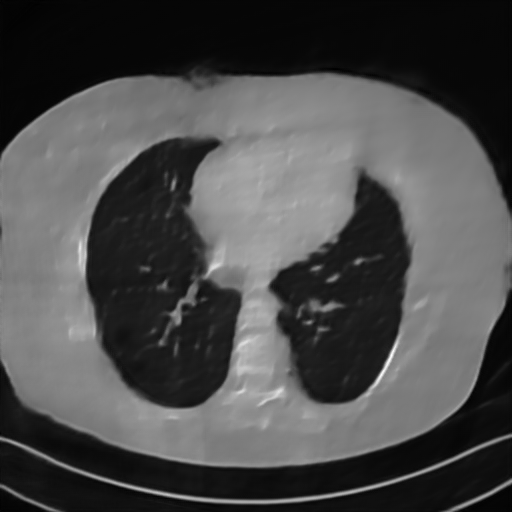}}
      \subfigure[GRAD]{
			\includegraphics[width=0.23\linewidth]{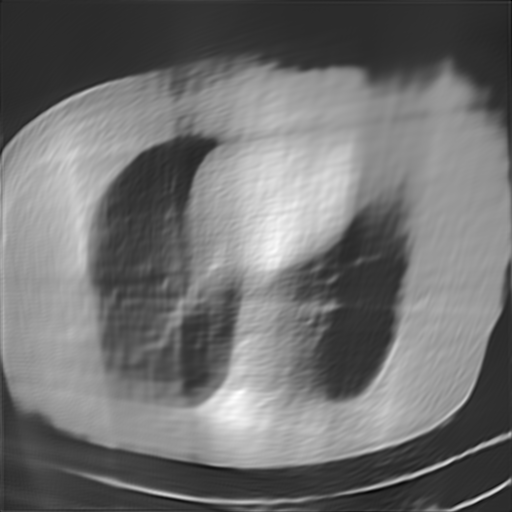}}
      \subfigure[LRPE]{
            \includegraphics[width=0.23\linewidth]{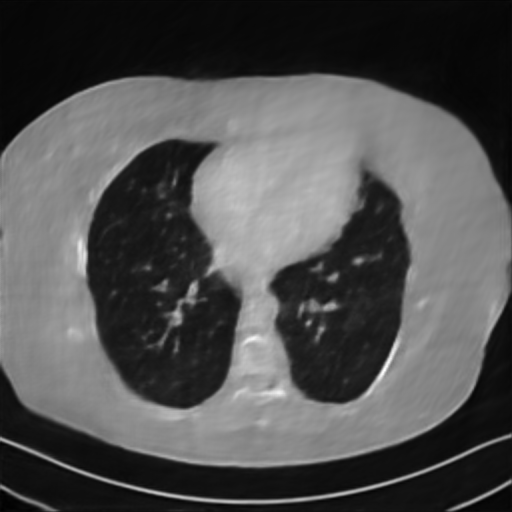}}
	\caption{Limited-angle reconstruction experiment of the AAPM phantom dataset with $90^{\circ}$ scanning angular range and 5\% Gaussian noises. }
	\label{90}
\end{figure*}

\begin{figure*}[htbp]
      \centering
      \subfigure[PSNR:9.1227]{
			\includegraphics[width=0.237\linewidth]{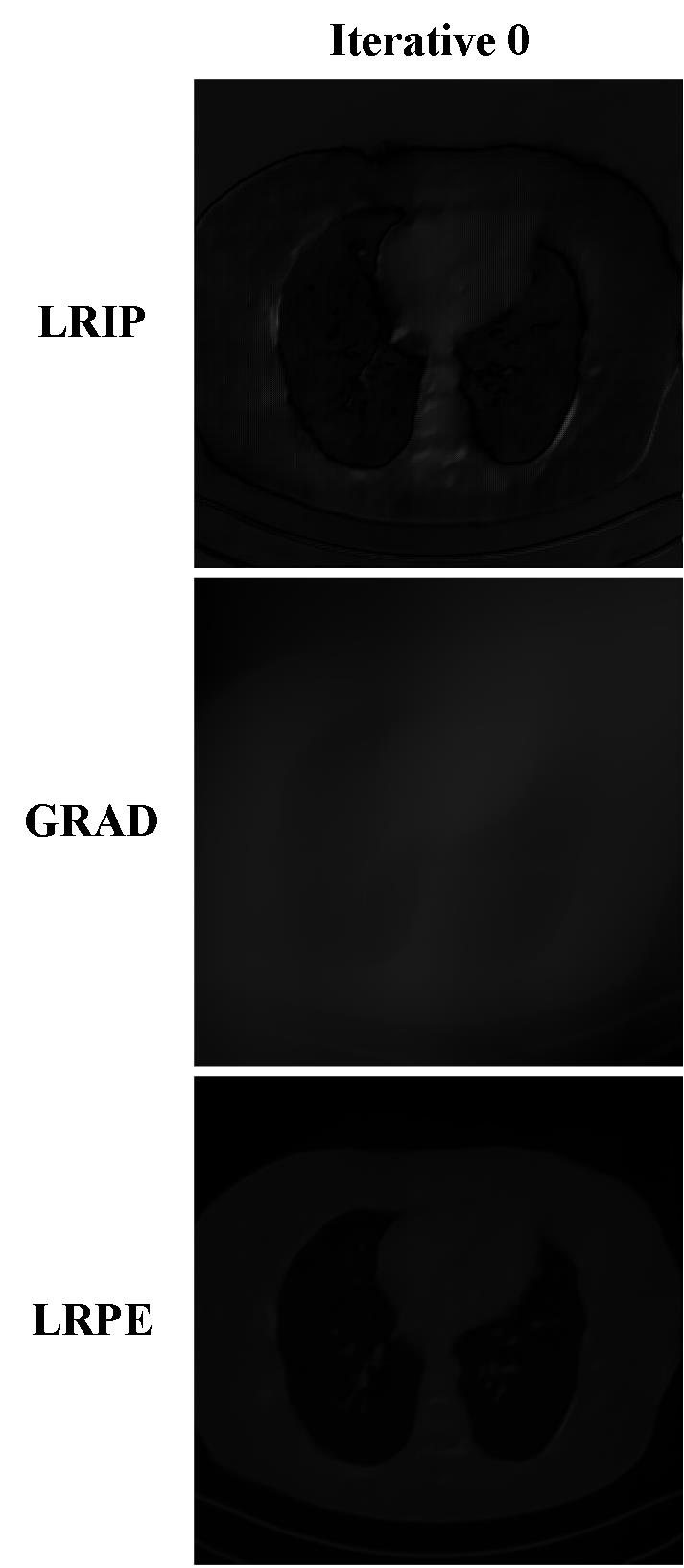}}
      \subfigure[PSNR:10.2367]{
			\includegraphics[width=0.17\linewidth]{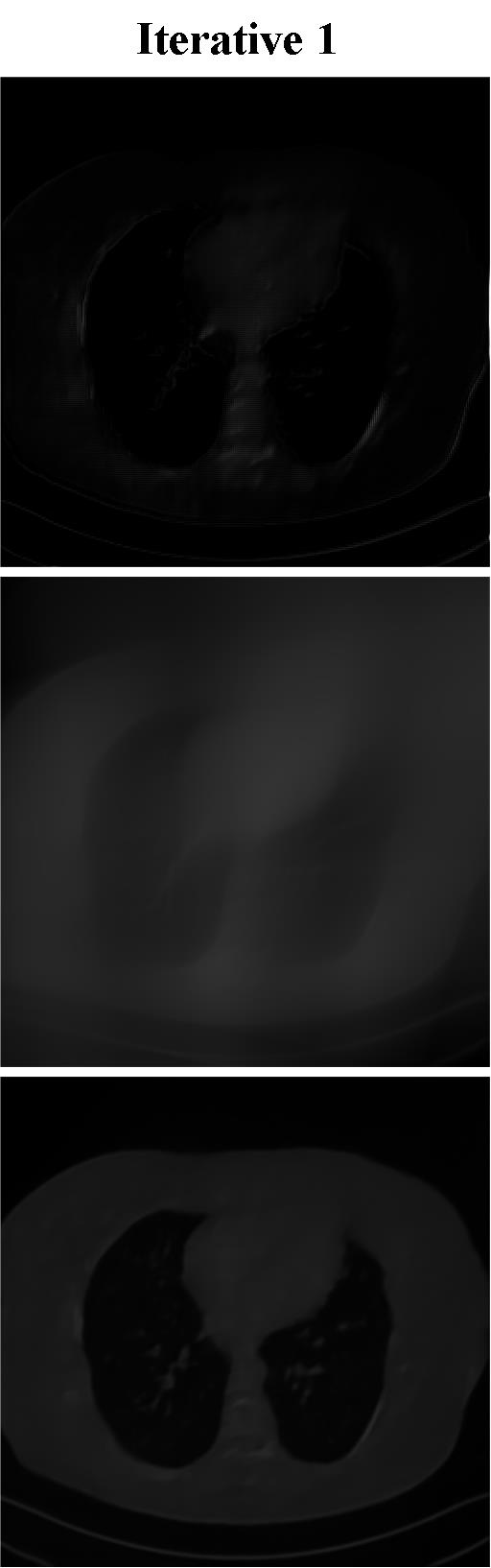}}
      \subfigure[PSNR:11.9091]{
			\includegraphics[width=0.17\linewidth]{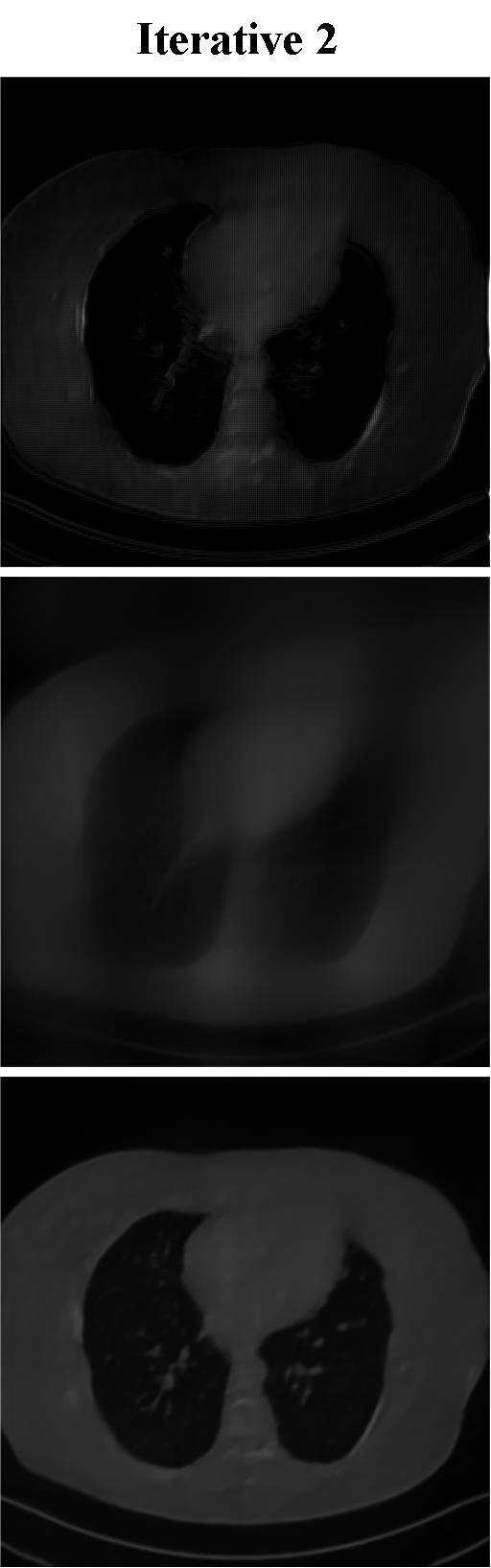}}
      \subfigure[PSNR:14.2079]{
			\includegraphics[width=0.17\linewidth]{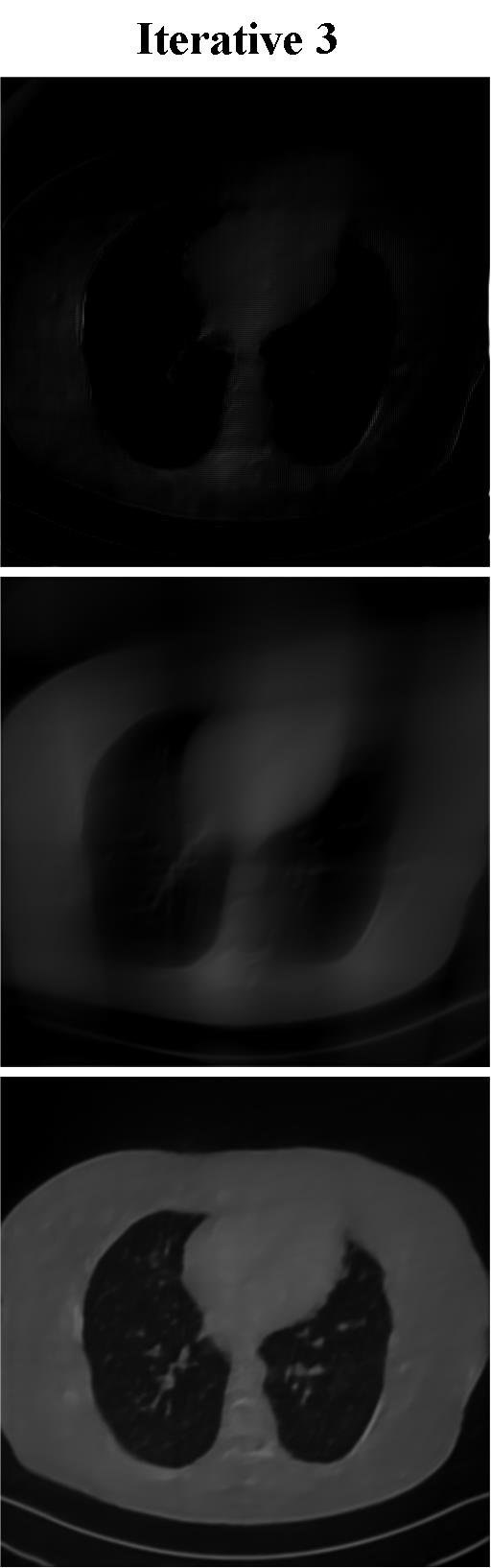}}
      \subfigure[PSNR:17.2330]{
			\includegraphics[width=0.17\linewidth]{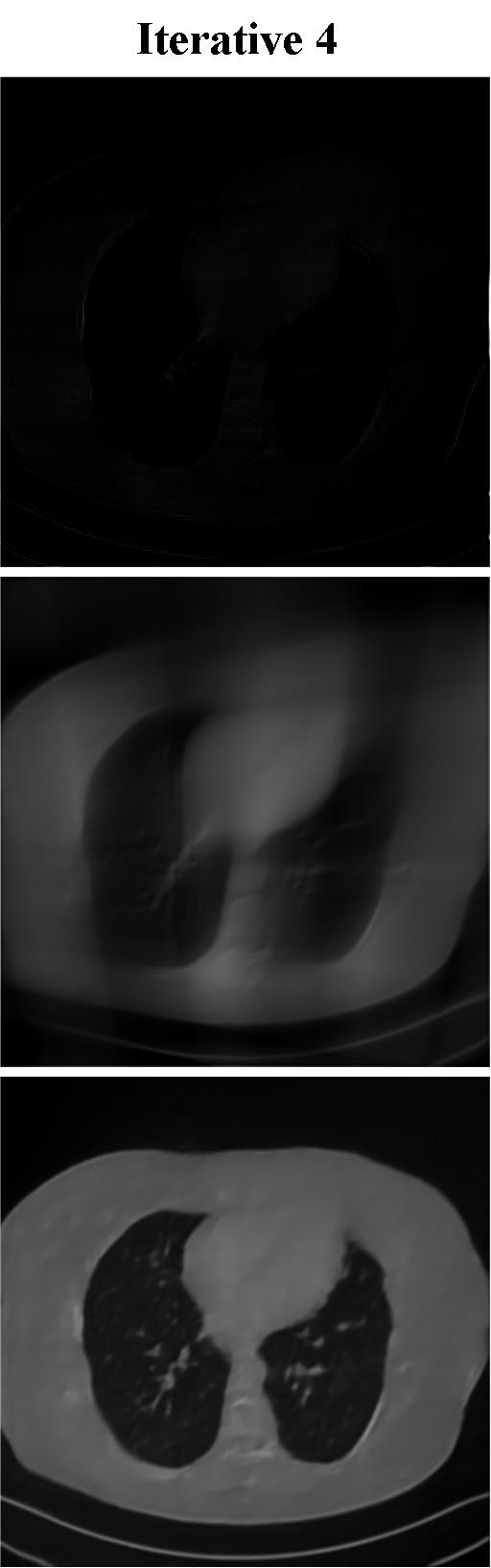}}\\
      \subfigure[PSNR:20.9317]{
			\includegraphics[width=0.237\linewidth]{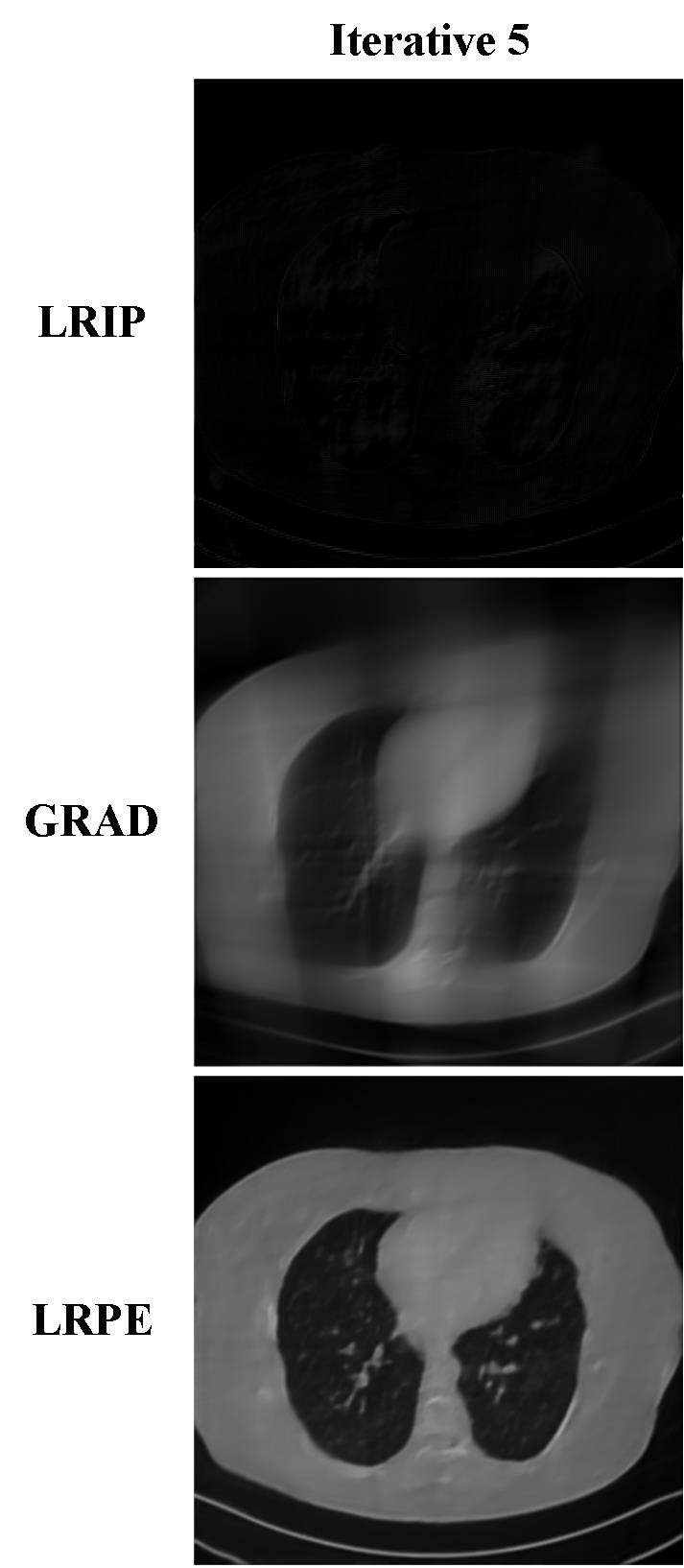}}
      \subfigure[PSNR:23.0121]{
			\includegraphics[width=0.17\linewidth]{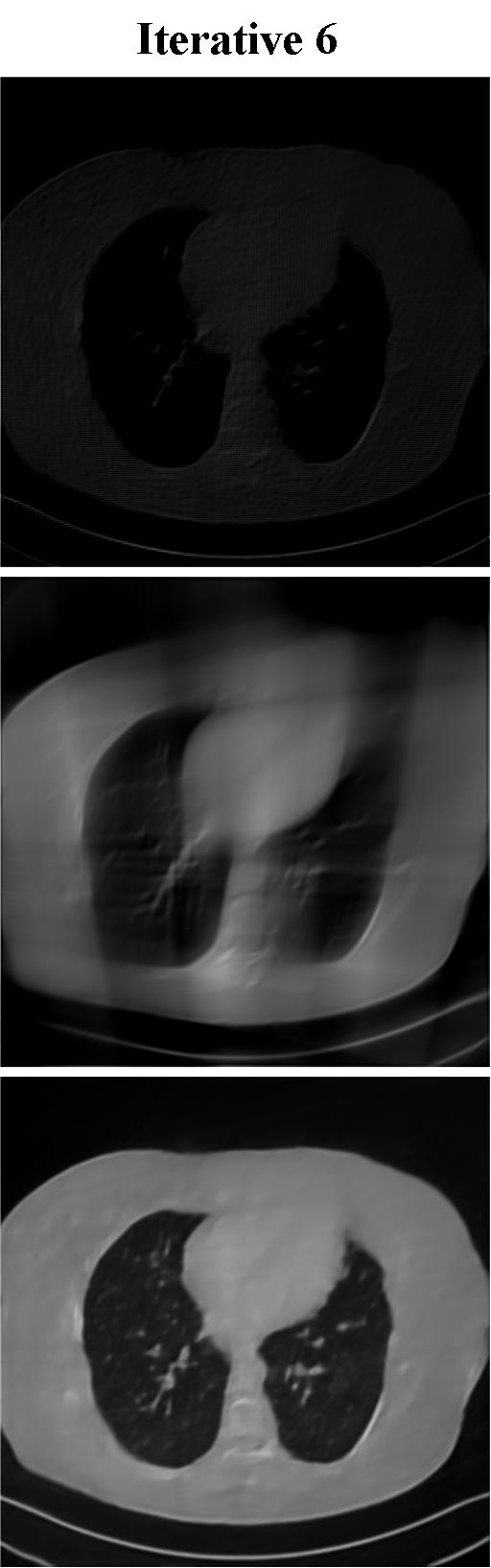}}
      \subfigure[PSNR:24.4391]{
			\includegraphics[width=0.17\linewidth]{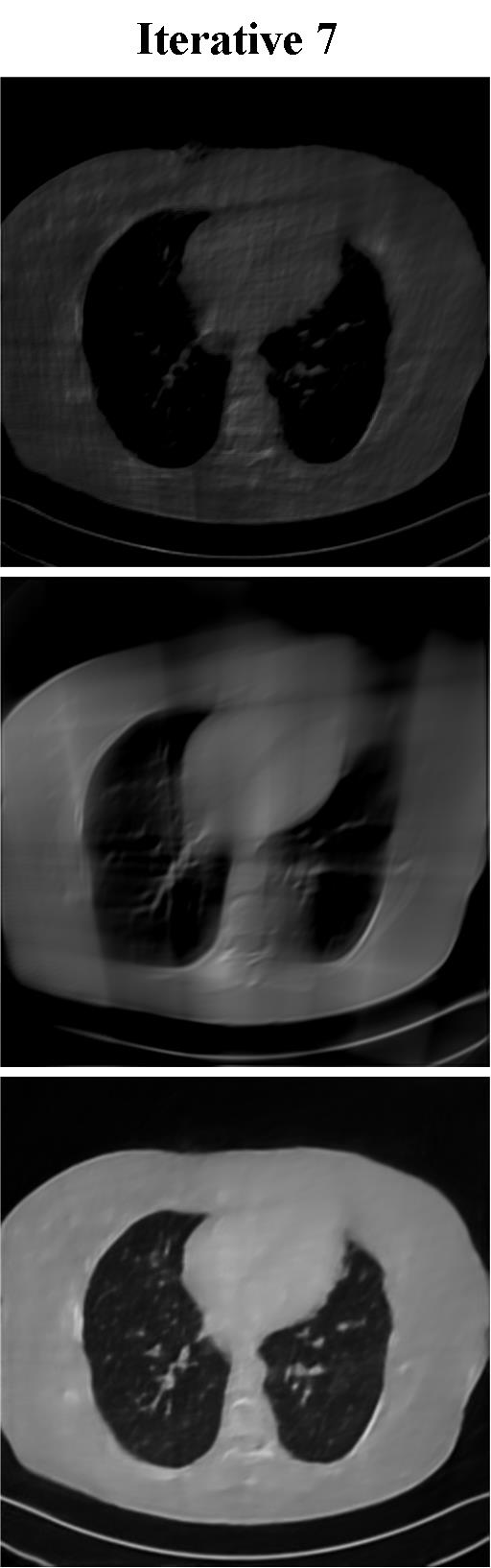}}
      \subfigure[PSNR:25.8657]{
			\includegraphics[width=0.17\linewidth]{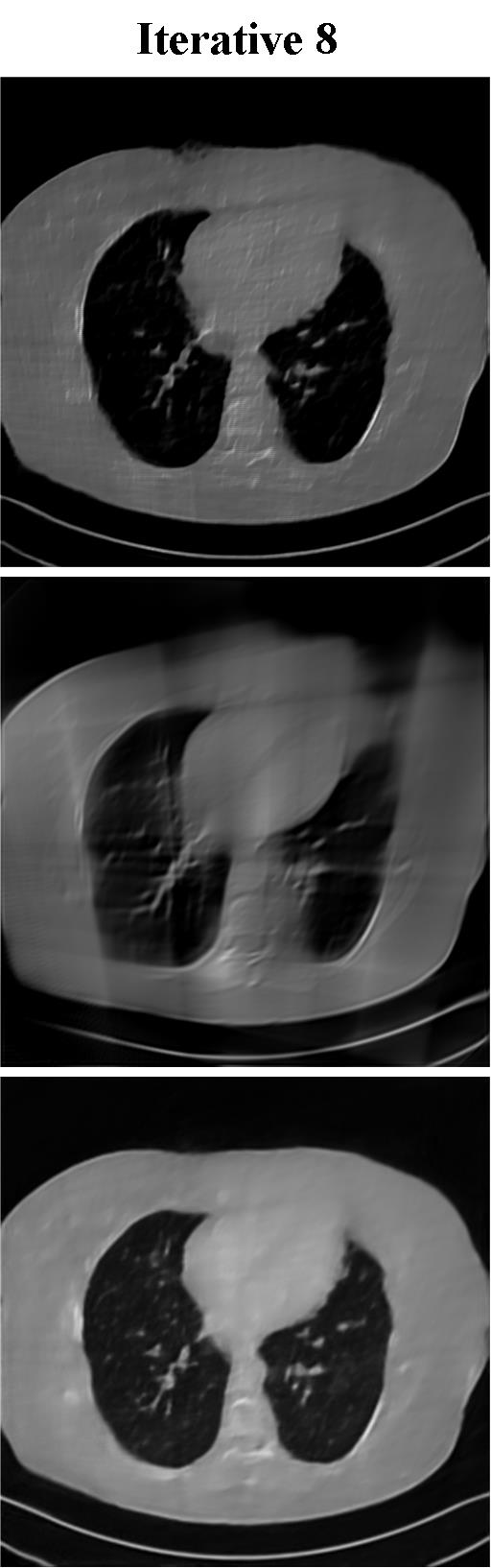}}
      \subfigure[PSNR:25.9517]{
			\includegraphics[width=0.17\linewidth]{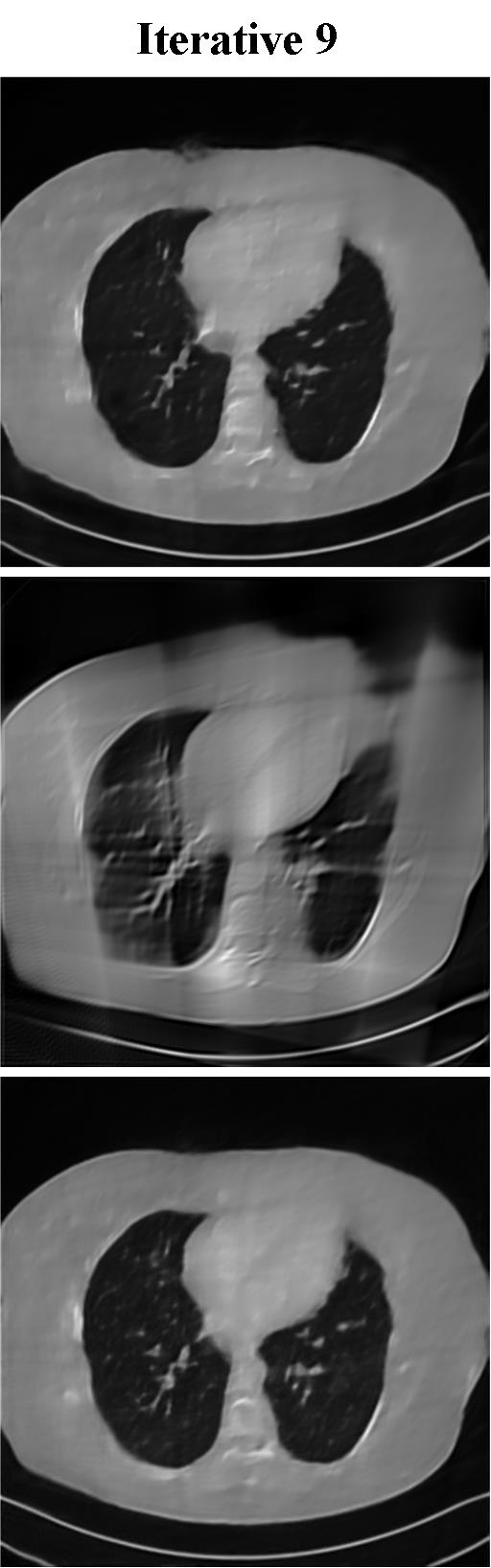}}
	\caption{The comparison results of the LRIP, GRAD, and LRPE model on the human phantom dataset with a scanning angular range of $90^{\circ}$ and 5\% Gaussian noise.}
	\label{iterate10}
\end{figure*}

\begin{figure*}[htbp]
      \centering
      \subfigure[Sparse-view reconstruction]{
			\includegraphics[width=0.48\linewidth]{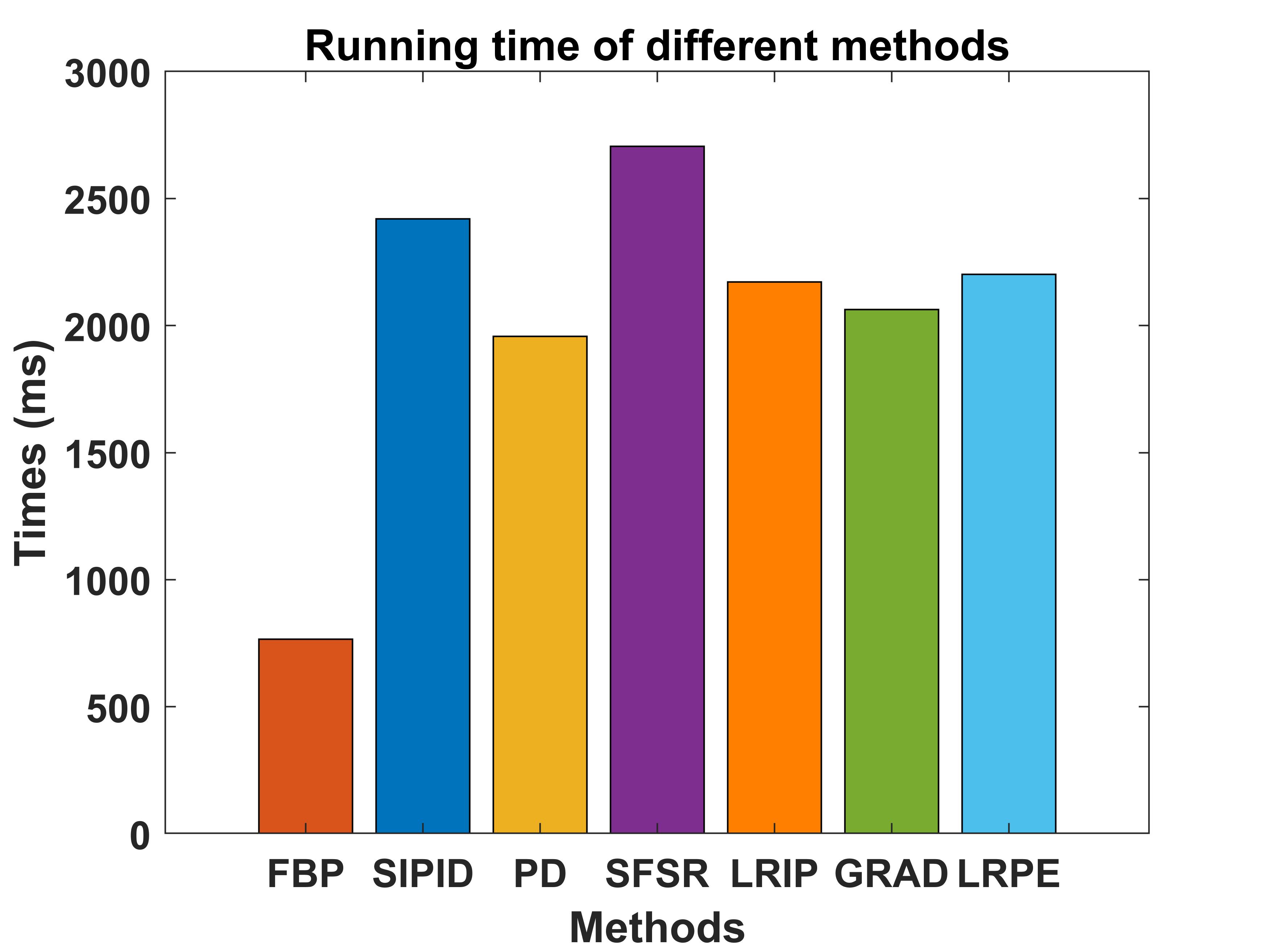}}
      \subfigure[Limited-angle reconstruction]{
			\includegraphics[width=0.48\linewidth]{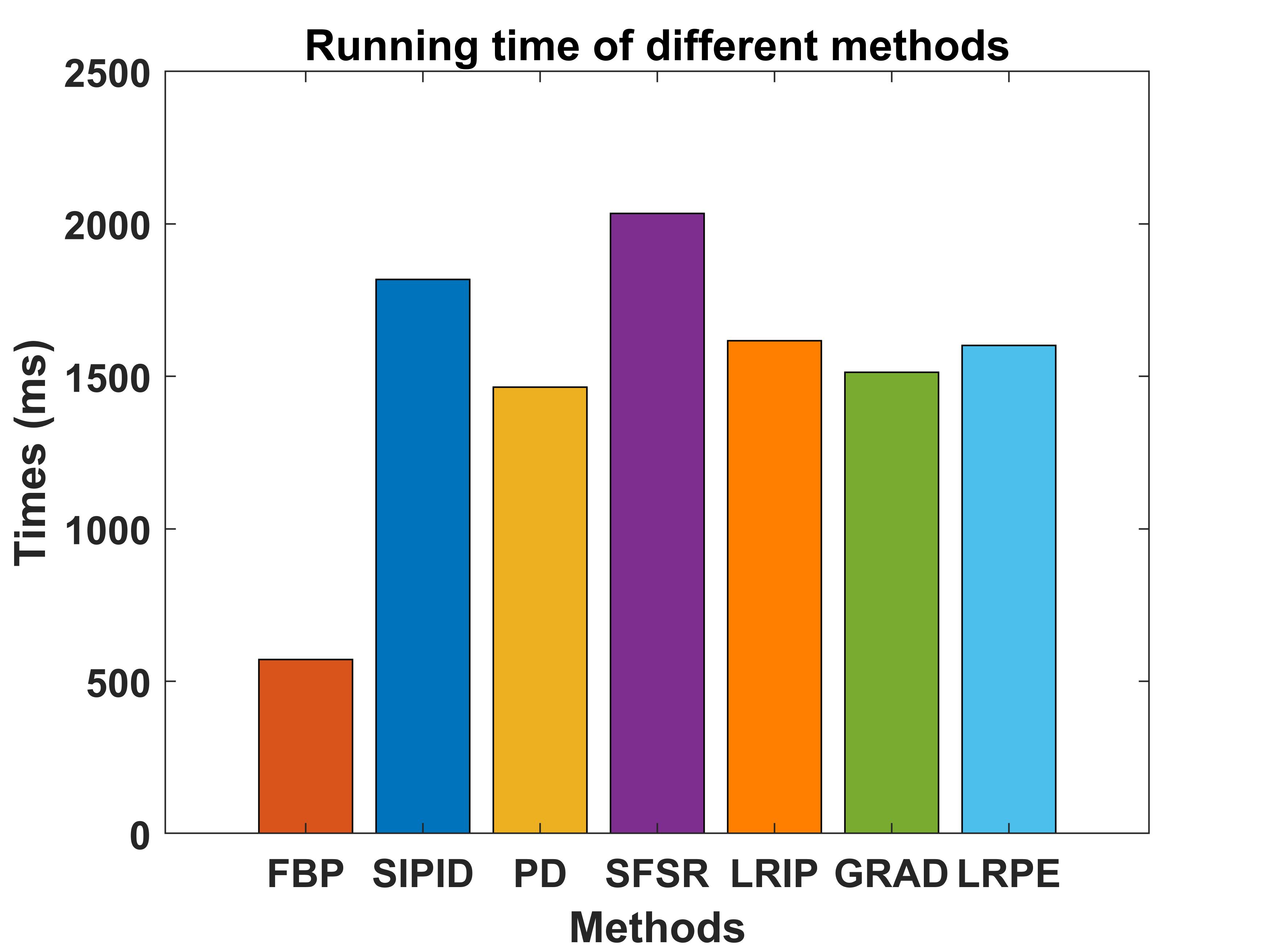}}
	\caption{The execution times of various methods were assessed using a sparse angle of 30 degrees and a limited scanning angular range of $90^{\circ}$.}
	\label{time}
\end{figure*}

Last but not least, we conduct the comparison study on the running time of the evaluated methods, which are illustrate in Figure \ref{time}. Notably, excluding the running time of the TV method, it exhibits an execution time approximately ten times longer than that of the FBP. By analyzing the bar chart results, we can observe that although our method is not as rapid as the PD method, we incur only a 100-millisecond increase in running time to achieve superior reconstruction results.

\section{Concluding Remarks}
In this paper, we proposed a novel end-to-end equilibrium imaging geometric model for the ill-posed image reconstruction problems. The low-resolution image was used prior to guarantee that the proposed model can achieve good reconstruction results based on incomplete projection data. The advantages of our proposal were validated using a substantial amount of numerical experiments on sparse-view and limited-angle settings. The experimental results demonstrated that our model can enable accurate and efficient CT image reconstruction. Although our downsampling imaging geometric modeling was studied for the fan-beam imaging systems, it can be also used for other modern CT settings. Similar works include \cite{he2021downsampled} and \cite{gao2022lrip}, where the low-resolution image prior was incorporated into the reconstruction network.
The prior information in \cite{he2021downsampled} was obtained by halving the receiver signal, requiring a second scan of the patient, which is clinically prohibitive. In contrast, our approach eliminates the need of re-scanning patients. On the other hand, Gao \emph{et al.} \cite{gao2022lrip} also requires two distinct sets of projection data. Obviously, the low-resolution image obtained from the same scan can better maintain consistency with the high-resolution image, thereby preserving more structural and fine information. According to the experimental results, although our LRPE model is superior to other algorithms in reconstruction results, there is still a problem of detail loss in the case of incomplete data sampling. Thus, enhancing the quality of low-resolution images through effective sinogram restoration methods is a possible way to improve the reconstruction results of our LRPE model on severe incomplete data problems. One of the future research directions is to enhance the resolution of reconstructed images under extreme limited-angle and low-dose scanning problems. In theoretical analysis, the noises in data present new challenges in estimating the gradients of network models. We plan to investigate the convergence of the deep equilibrium model based on stochastic gradient similar to \cite{bottou2018optimization}.

\section{Acknowledgments}

The work was partially supported by the National Natural Science Foundation of China (NSFC 12071345).

\section*{Appendix}
\bibliographystyle{plain}
\bibliography{RefsDouble-resolution}

\end{document}